\newtheorem*{defn}{Definition}
\newtheorem{propn}{Proposition}
\newtheorem{lemma}{Lemma}
\newtheorem{thm}{Theorem}
\newtheorem{cor}{Corollary}
\def\aligned{\succ\mkern-10mu\prec}
\def\tprec{\prec\mkern-8mu\prec}
\title{{\huge Simple Rigs Hold Fast}} 
\author{{\LARGE T.R.I.E.}\\Kris Coward\footnote{TODAQ}, D.R. Toliver\footnotemark[1]}
\date{}
\begin{document}

\maketitle



\section{Introduction} 

An important use of computational systems
is updating the state of an object 
while preserving some set of invariants.
That object
might be a file, 
a row in a database, 
or perhaps an entry in a distributed system.
Its invariants may place limits on its 
relationships with other objects,
and generally include maintaining 
a unique identity across updates.


The system maintaining these invariants is responsible for not \emph{equivocating}\footnote{An equivocation in an object management system is much like an inconsistency in a logic system. Has this object upheld its invariants? The question is effectively meaningless in the presence of equivocation, because the answer can be yes, and also no.},
or presenting conflicting versions of an object.
We call this absence of equivocation \emph{integrity}.


Today the system providing integrity to an object also manages its state. 
Indeed, it has long been assumed that this must be the case, and that only the system managing an object's state can provide it with integrity.




We show that this assumption is wrong.
Objects can maintain the integrity of one system while having their state fully managed by another.

~~~




We begin by examining prior work,
and then define the basic structures of rigging: twists, lines, and hitches. 
We use these to construct rigs. 

A rig is a cryptographic data structure that provides integrity-at-a-distance.
Expressing an object in this structure 
allows its source of truth to be transferred
while retaining its source of integrity.

This guarantee holds despite the integrity source being completely oblivious about the object, and can be verified locally without network calls. 
The result is that an object whose state is managed on one machine, like an untrusted laptop, can have the same integrity as a highly trusted server.

We define the rig property of holding fast, 
which is a strong form of causal connection 
between an object (i.e. a line) 
and an integrity provider (also a line).

We present guilds, which are sets of rigs, and introduce the property of support that some guilds have. 
Supportiveness is a uniqueness property: 
a rig from a supportive guild is proof that no other rig in that guild 
holds a conflicting version of that object fast to that integrity provider's line.

We introduce a very restrictive guild, $G_H$, containing only single hitch rigs, and show that $G_H$ is supportive.

We present two basic operations on rigs, splicing and lashing, and prove that rigs are closed under them. We introduce a guild transformer, $\Psi(G)$, and prove that it preserves supportiveness.

We 
define $G_\uparrow$, 
the simplest guild that is closed under the splicing and lashing operations.

Finally, we show that $G_\uparrow$ is supportive.

\section{Prior Work}


%

The data structures described in this paper 
utilize idealized one-way functions,
which provide a deterministic, fixed-size output
from a given input,
and are easy to compute,
but impossible to invert, collide, or correlate.

Although these ideal one-way functions do not exist, 
their desired properties are sufficiently closely approximated by
cryptographic hash functions \cite{hashfunctions} 
for our purposes.
We acknowledge this by referring to these one-way functions, 
and their outputs, as hashes, 
but use the ideal functions for the sake of simplifying statements 
such as saying that a structure is unique, 
rather than unique up to hash collisions.


There is a causal/chronological relation between the input and output of a hash.
Taking $h$ to be a hash function,
if $y$ is some message incorporating $h(x)$, 
and $h(y)$ is the hash of that message,
then $h(x)$ must be known before $h(y)$ is computed.
This relation is used heavily in log management \cite{hashchrono},
and forms a strict partial order \cite{lamportclocks}
(i.e. transitive, antisymmetric, and non-reflexive).
We call this relationship \emph{causal precedence}, and denote it as $x \tprec y$.

The trie data structure described by René de la Briandais \cite{trieoriginal} and Edward Fredkin \cite{triememory},
when prepared with hashes in the manner described by Ralph Merkle \cite{merkletree},
provides an elegant way to fingerprint key-value data structures,
with concise proofs of membership that uniquely associate a key and its value.

Rigging implementations employ these concise proofs for operational efficacy, 
and when we speak of \emph{tries} it is this form to which we refer.
Further, when we speak of taking the hash of a trie, or including it in another structure,
the value can be taken to be its root node
when this is sufficient to verify the proofs just described.

We do not directly consider the nature of cryptographic devices for restricting updates, 
but note that there are multiple mechanisms by which this can be accomplished,
such as signatures based on public key ciphers like RSA \cite{rsacrypt},
and one-time signatures like those designed by Leslie Lamport \cite{lamportsig}.

\section{Twists}

Rigs are made of \emph{twists}.
A twist ${\tt y}$ 
is a structure containing the following hash references:

\begin{itemize}
    \item ${\tt p}({\tt y})$, its \emph{previous} twist, arranges twists sequentially into lines;
    \item ${\tt t}({\tt y})$, its \emph{tether} twist, establishes relations between lines; and
    \item $r({\tt y})$, its \emph{rigging} trie, holds key-value pairs that complete relations between lines.
\end{itemize}

\noindent



As a concrete data structure $y$ is merely a concatenation of hashes\footnote{Implementation details can be found in ``Rigging Specifications".}.  
However the selector functions above return twists, not hashes, 
allowing statements like ${\tt x} = {\tt p}({\tt y})$ to be made. 
This is standard object/pointer ambiguity, 
which we can disambiguate when needed by saying $h({\tt x}) = h({\tt p}({\tt y}))$.

Similarly, we treat {\tt p}, {\tt t}, and $r$ as functions on the space of twists,
even though the non-injectivity of hash functions technically makes them relations rather than functions\footnote{This simplification can also be treated as an assumption that our ideal one-way function $h$
is perfectly collision resistant and therefore injective.}.






~~~

\noindent
A twist ${\tt x}$ is a \emph{predecessor} of another twist ${\tt y}$
(denoted ${\tt x} \prec {\tt y}$) if and only if:
\begin{itemize}
    \item ${\tt x} = {\tt p}({\tt y})$ (the hash $h({\tt x})$ is equal to ${\tt y}$'s previous twist hash), or
    \item $\exists {\tt z} : {\tt z} \prec {\tt y}, {\tt x} = {\tt p}({\tt z})$ (there is a twist ${\tt z}$ which is a predecessor of ${\tt y}$, and which has ${\tt x}$ as its previous).
\end{itemize}

Note that ${\tt x} \prec {\tt y}$ implies ${\tt x} \tprec {\tt y}$,
due to the inclusion of each twist's direct predecessor's hash in its own structure
and the recursive construction of the predecessor relation.

A twist ${\tt y}$ is a \emph{successor} of ${\tt x}$ (denoted ${\tt y} \succ {\tt x}$)
if and only if ${\tt x}$ is a predecessor of ${\tt y}$.

When ${\tt x} = {\tt p}({\tt y})$, we may also say that ${\tt x}$ is the \emph{direct predecessor} of ${\tt y}$,
or that ${\tt y}$ is a \emph{direct successor} of ${\tt x}$.
In reference to the previous field in each twist,
we also refer to a twist's direct predecessor as its \emph{previous}.

Further, if ${\tt x} \prec {\tt y}$ or ${\tt x} = {\tt y}$,
we denote this relation as ${\tt x} \preccurlyeq {\tt y}$ (alternately ${\tt y} \succcurlyeq {\tt x}$).

It is worth noting that the relation $\prec$ is a partial order on the set of twists.
Moreover, because each twist has only one previous,
this partial order has the structure of a tree.

~~~

We classify twists based whether their tether field is null\footnote{Mathematically we can think of this as being the empty hash. Implementations use an out-of-band signal, the null hash algorithm byte, as described in Rigging Specifications.}. 

A \emph{fast twist} is tethered to the value of its tether field. The twist is fastened, or made fast, by this tethering.

A \emph{loose twist} has a null tether, and is not fastened to anything.

To work with fast twists efficiently we introduce a number of functions that take twists and produce fast twists.

We take ${^*\tt p}({\tt x})$, the \emph{fast previous} of ${\tt x}$, to be the fast twist which most directly precedes ${\tt x}$.
$$
{^*\tt p}({\tt x}) =
\begin{cases}
    {\tt p}({\tt x}),& \text{if } {\tt p}({\tt x}) \text{ is fast}\\
    {^*\tt p}({\tt p}({\tt x}))& \text{otherwise}\\
\end{cases}
$$
It is trivial to see that  ${\tt y} = {^*\tt p}({\tt x})$ iff
${\tt y}$ is fast, and ${\tt y} \prec {\tt x}$, and ${\tt y} \prec {\tt z} \prec {\tt x}$ implies ${\tt z}$ is loose.

Similarly, we take the \emph{fast tether} ${^*\tt t}({\tt x})$ to be
the most directly preceding or equal fast twist to the tether of ${\tt x}$.
$$
{^*\tt t}({\tt x}) =
\begin{cases}
    {\tt t}({\tt x}),& \text{if } {\tt t}({\tt x}) \text{ is fast}\\
    {^*\tt p}({\tt t}({\tt x}))& \text{otherwise}\\
\end{cases}
$$

\section{Lines}


Twists link together into lines. 

\begin{defn}
A \emph{line}
is a sequence of consecutive twists $[{\tt c}_0,...,{\tt c}_n]$,
where ${\tt c}_i = {\tt p}({\tt c}_{i+1})$ for each $i$.
\end{defn}

Between twists in a line, the relation ${\tt c} \preccurlyeq {\tt d}$
is a linear order\footnote{This linearity is part of the reason consecutive twists are referred to as a ``line". Rigs provide a way to maintain that linearity in the forward direction.}.
Lines contain all their \emph{sublines}, in the obvious way.


\begin{defn}
A line {\tt A} is called an \emph{enveloping line} for 
a collection of lines $\{{\tt A}_i\}_{i=0}^n$ if {\tt A} contains every twist in each ${\tt A}_i$.
\end{defn}

\begin{defn}
The lines in a collection with an enveloping line are called \emph{aligned},
denoted $\aligned\{{\tt A}_i\}_{i=0}^n$ or ${\tt A} \aligned {\tt B}$.
\end{defn}

If a pair of twists ${\tt x}$ and ${\tt y}$ are comparable by the predecessor relation
(i.e ${\tt x} \preccurlyeq {\tt y}$ or ${\tt y} \preccurlyeq {\tt x}$),
then the single-twist lines $[{\tt x}]$ and $[{\tt y}]$ are necessarily aligned.
Consequently, we use the same notation ${\tt x} \aligned {\tt y}$ to denote when twists are aligned.

This allows us to describe the tree order structure on the predecessor relation as providing that
if ${\tt x} \preccurlyeq {\tt z}$ and ${\tt y} \preccurlyeq {\tt z}$,
then ${\tt x} \aligned {\tt y}$.

On the other hand, given for example
twists ${\tt x}$, ${\tt y}$, and ${\tt z}$
with ${\tt z} = {\tt p}({\tt x})$ 
and ${\tt z} = {\tt p}({\tt y})$,
then ${\tt z} \preccurlyeq {\tt x}$ together with ${\tt z} \preccurlyeq {\tt y}$
is clearly not sufficient to imply ${\tt x} \aligned {\tt y}$. 
When two lines share a common twist but are not aligned,
we say that they are \emph{misaligned}.

Another consequence of the concept of alignment is that if ${\tt x} \aligned {\tt y}$
and ${\tt x} \tprec {\tt y}$, then it follows (from asymmetry and non-reflexivity of $\tprec$)
that ${\tt x} \prec {\tt y}$.

~~~

We introduce a restricted notion of the length of a line, where
the line being measured begins on a fast twist, 
ends on a fast twist,
and the measurement of interest is only concerned with fast twists.

\begin{defn}
A \emph{fast line} is a line that begins and ends with fast twists.
\end{defn}

\begin{defn}
The \emph{length} of a fast line is one less than the number of fast twists it contains.
\end{defn}

Roughly speaking, fast twists are treated as fenceposts,
and the length of the line is the number of sections of fence.
Equivalently, we can say that a fast line ${\tt X} = [({^*\tt p})^n({\tt x}),...,{\tt x}]$
has a length $l({\tt X}) = n$.

Length is undefined on lines that are not fast. A line can be expanded or contracted to make it fast.





\section{Hitches}

The basic unit of rigging is called a \emph{hitch}.
The operation that a hitch performs between a pair of lines is called \emph{hitching}.

A hitch $H$ involves 5 twists
\begin{itemize}
    \item ${\tt f}(H)$, the \emph{fastener} 
    \item ${\tt l}(H)$, the \emph{lead} 
    \item ${\tt m}(H)$, the \emph{meet} 
    \item ${\tt h}(H)$, the \emph{hoist} 
    \item ${\tt n}(H)$, the \emph{post}\footnote{While the other twists are identified by their first letter, {\tt p} is already in use, so we use {\tt n} because it comes sequentially after {\tt l} and {\tt m}. The logic of this sequencing comes from the definition of footline below.}
\end{itemize}

The relations between the twists in a hitch are as follows

\begin{itemize}
    \item ${\tt f}(H) = {\tt t}({\tt l}(H))$ 
          (the fastener is the tether of the lead)
    \item ${\tt l}(H) = {^*\tt p}({\tt m}(H))$
          (the lead is the fast predecessor of the meet)
    \item ${\tt m}(H) = {^*\tt p}({\tt n}(H))$
          (the meet is the fast predecessor of the post)
    \item $r({\tt h}(H)): h({\tt l}(H)) \mapsto h({\tt m}(H))$
    (the hoist's rigging trie associates lead:meet as key:value)
    \item the hoist is \emph{the first} successor of the fastener to associate a value with the lead in its rigging trie
    \item $r({\tt n}(H)): h({\tt l}(H)) \mapsto h({\tt h}(H))$
    (the post's rigging trie associates lead:hoist as key:value)
\end{itemize}

Note that while a direct inclusion 
of $h({\tt l}(H)):h({\tt m}(H))$ 
in the hoist's rigging trie is specified here, 
the only required property of the inclusion mechanism
is that it can prove the unique value 
associated with a given key\footnote{This unique value might be null. Implementations
employ a more complex inclusion mechanism, which has additional desirable properties.}. 

~~~

We also define a \emph{half-hitch}, which contains all the twists of a hitch except the post, and all relations of a hitch except the two involving the post.

~~~

With the mechanics of the hitch described,
we can now turn to the lines that are naturally required to exist in a hitch $H$
based on the required relations
\begin{itemize}
    \item ${\tt f}(H) \prec {\tt h}(H)$
    \item ${\tt l}(H) = {^*\tt p}({\tt m}(H))$ (implying ${\tt l}(H) \prec {\tt m}(H)$)
    \item ${\tt m}(H) = {^*\tt p}({\tt n}(H))$ (implying ${\tt l}(H) \prec {\tt m}(H) \prec {\tt n}(H)$)
\end{itemize}
Thus the lines that necessarily exist in every hitch are
\begin{itemize}
    \item the \emph{topline} ${\tt T}(H) = [{\tt f}(H),...,{\tt h}(H)]$
    \item the \emph{footline} ${\tt F}(H) = [{\tt l}(H),...,{\tt m}(H)]$ 
    \item \emph{extended footline} ${\tt F}^*(H) = [{\tt l}(H),...,{\tt m}(H),...,{\tt n}(H)]$ 
\end{itemize}

\filbreak
\begin{center}
\begin{samepage}
    \includegraphics[width=0.55\columnwidth]{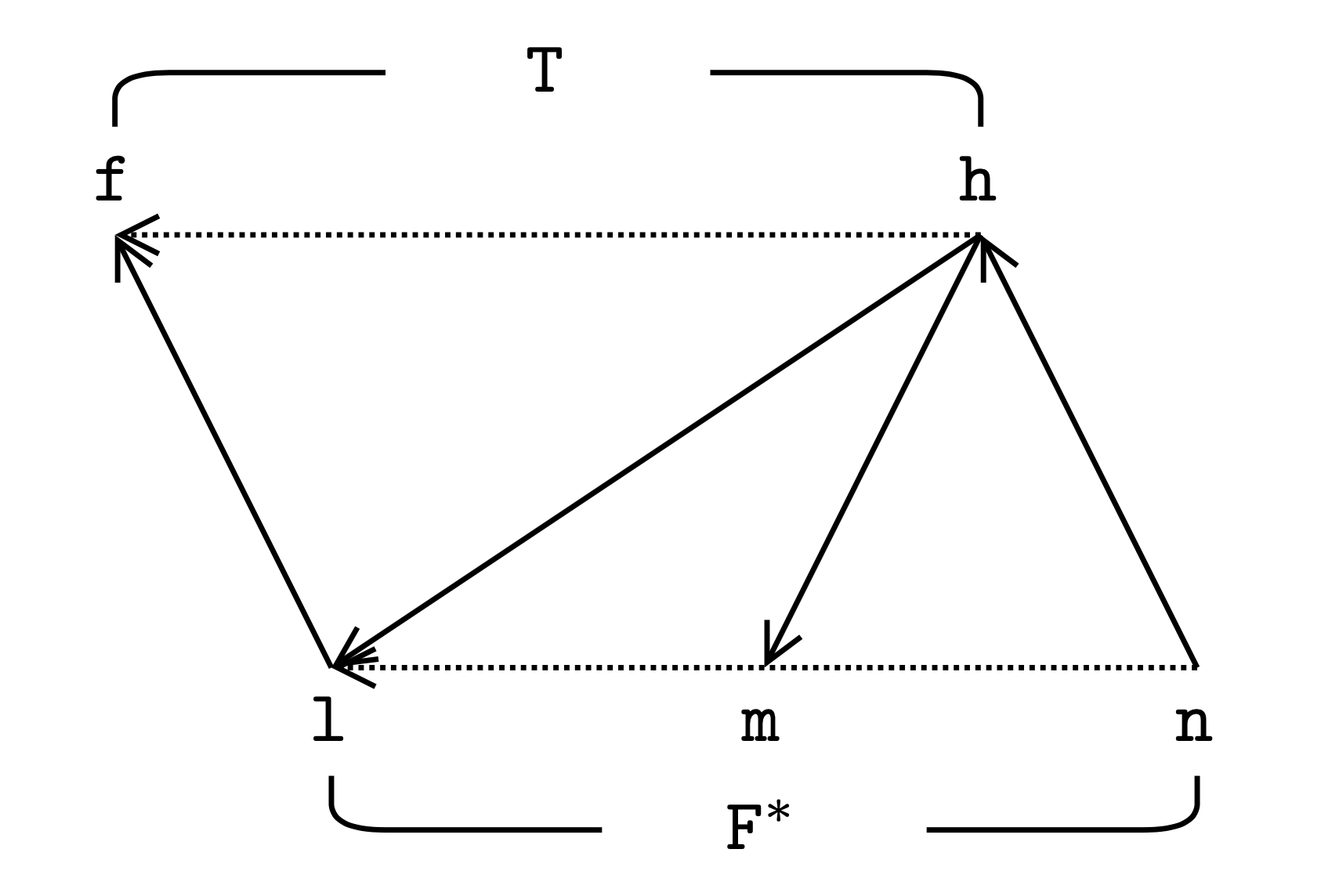}
 
 \nopagebreak   
    {\it A basic hitch. 
    Solid arrows represent direct reference: ${\tt l}$ includes the hash of ${\tt f}$.
    Dotted arrows indicate predecessors: ${\tt l} \prec {\tt m} \prec {\tt n}$.
    Time flows from left to right\footnote{Unlike an arrow.}.}
\end{samepage}
\end{center}

We say that ${\tt A}$ is \emph{hitched to} ${\tt B}$ 
iff there is a hitch $H$ in which
${\tt A} = {\tt F}(H)$ and  
${\tt B} = {\tt T}(H)$.
That is, 
${\tt A}$ is the footline of $H$ and 
${\tt B}$ is the topline of $H$.

There is a strict chronological order imposed on the twists of a hitch through hash inclusion:
\begin{equation} \label{hitchchrono}
    {\tt f}(H) \tprec {\tt l}(H) \tprec {\tt m}(H) \tprec {\tt h}(H) \tprec {\tt n}(H)
\end{equation}

\section{Rigs and Guilds}

Hitches serve as the fundamental unit of rigging, and also make up the smallest rigs.
Rigs generalize hitches and allow us to provide the same guarantees over larger structures.

First we introduce a causal relation between lines.

\begin{defn}
We say that ${\tt A}$ \emph{holds fast} to ${\tt Z}$ if
${\tt A} = [{\tt a}_\alpha,...,{\tt a}_\omega]$ and ${\tt Z} = [{\tt z}_\alpha,...,{\tt z}_\omega]$
are lines satisfying the causal relation
${\tt z}_\alpha \tprec {\tt a}_\alpha \tprec {\tt a}_\omega \tprec {\tt z}_\omega$.
\end{defn}

It is worth noting that because $\tprec$ is transitive, 
and implied by $\preccurlyeq$, 
the relation of holding fast is transitive.


\begin{center}
    \includegraphics[width=0.45\columnwidth]{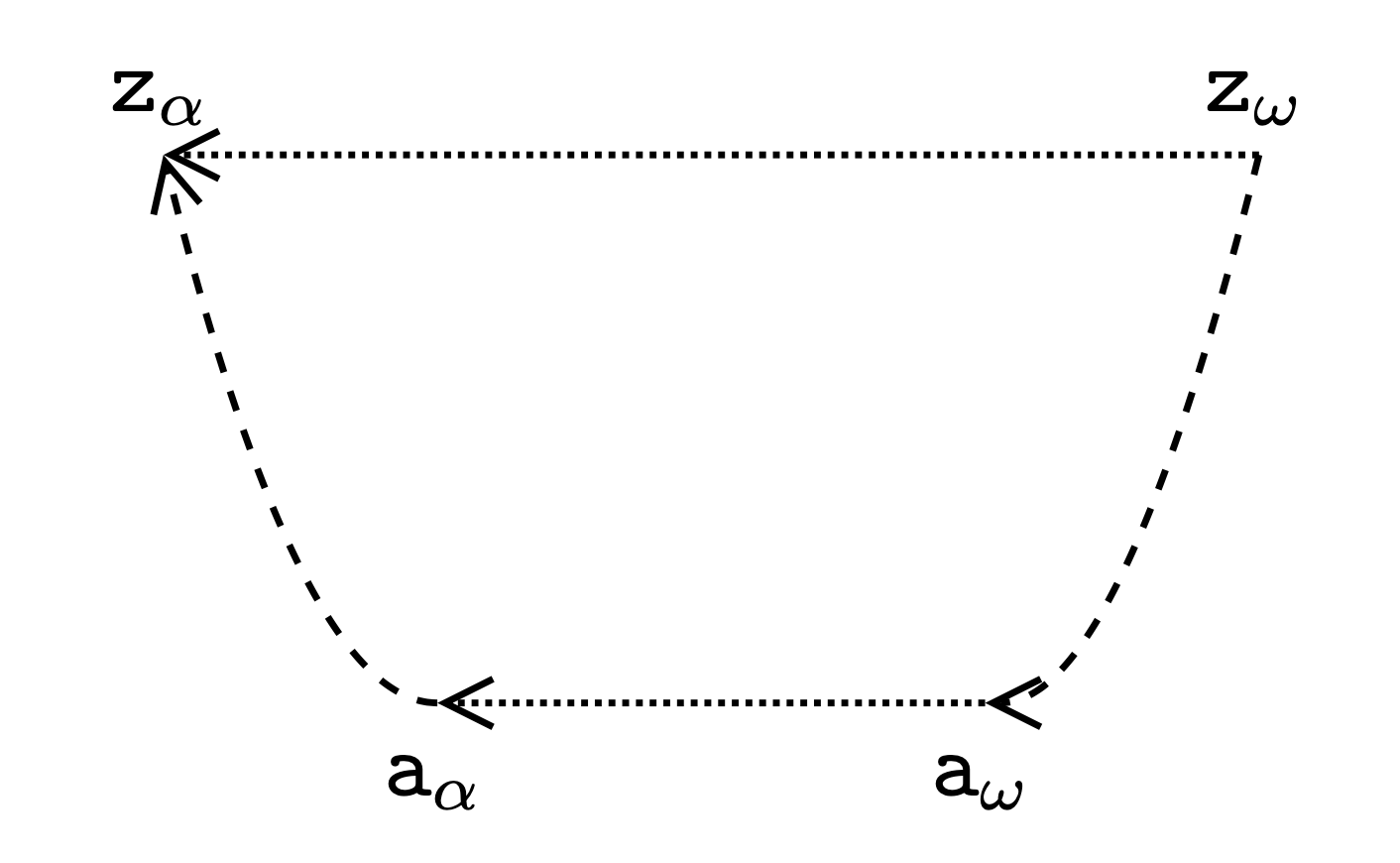}

    \nopagebreak
    {\it A holding fast to Z. 
    The dashed arrows represent chained references,
    and are curved to reinforce their ascent up the tethers
    on the past side 
    and descent down the rigging tries on the future side.}
\end{center}

\begin{defn}
A \emph{rig} $\mathfrak{R}$ is a 
collection of twists,
consisting of a \emph{corkline} ${\tt C}(\mathfrak{R})$,
a \emph{leadline} ${\tt L}(\mathfrak{R})$,
and additional twists sufficient
to demonstrate that ${\tt L}(\mathfrak{R})$ 
holds fast to ${\tt C}(\mathfrak{R})$.
\end{defn}

A rig contains the twists for all lines within it,
including its corkline and leadline\footnote{Astute readers may be wondering why this paper is called Simple Rigs Hold Fast, when in fact all rigs hold fast by definition. A linguistic shift occurred over the course of developing this work: the title came about at a time when the rigs in $G_\uparrow$ were called simple rigs, and what is now called support was referred to as holding fast. As a result the meaning of the title is a bit anticlimactic compared to ``$G_\uparrow$ is Supportive", but it does have a nice ring to it.}.

~~~

A \emph{guild} is a set of rigs.
Guilds provide constraints on acceptable rigs,
and can therefore place limits 
on the leadlines that can be rigged to a corkline.
Particularly constrained guilds have uniqueness properties which allow for useful constructions.



Our first guild, $G_H$, as described in Section \ref{G_Hsup},
is the set of rigs which consist of a single half-hitch, 
where the footline of the half-hitch is the leadline of the rig,
and the topline of the half-hitch is the corkline of the rig.

A half-hitch has the particular constraint that a given corkline 
can only support a single version of a given leadline. 
In other words, the uniqueness of successors on the corkline
guarantees the uniqueness of successors of the lead on the leadline.
This is proven below as Theorem \ref{half-hitch_sound}.

Note that the uniqueness of predecessors enables
a rig's corkline to be completely reconstructed from its last twist.
As such this last twist of the corkline
guarantees the uniqueness of the succession of twists 
from the lead forward along the leadline.

We say that a line {\tt A} is \emph{supported by} a twist ${\tt z}$
if there is a rig $\mathfrak{R}$ whose leadline is {\tt A} 
and whose corkline has ${\tt z}$ as its last twist.
In this situation, we may also say that {\tt A} is ${\tt z}$-supported, 
and that {\tt Z} supports {\tt A},
and that this support comes \emph{via} or \emph{through} $\mathfrak{R}$.

\section{Support}

We generalize this just-described uniqueness relation by elevating it to a property of guilds. The supportiveness of a guild allows its rigs to provide meaningful support.

We begin by defining a pair of relations between rigs.
The first relation is alignment, which rigs inherit from their leadlines.

\begin{defn}
Two rigs $\mathfrak{R}$ and $\mathfrak{R}'$ are \emph{aligned}
iff ${\tt L}(\mathfrak{R}) \aligned {\tt L}(\mathfrak{R}')$.
\end{defn}

\noindent
The second relation is disjointness.

\begin{defn}
Two rigs $\mathfrak{R}_0$ and $\mathfrak{R}_1$ are \emph{disjoint} if either of the following hold:
\begin{itemize}
    \item ${\tt C}(\mathfrak{R}) \not\aligned {\tt C}(\mathfrak{R}')$ (the corklines are not aligned), or
    \item ${\tt L}(\mathfrak{R}) \cap {\tt L}(\mathfrak{R}') = \varnothing$
    (the leadlines have no twists in common).
\end{itemize}
\end{defn}

Given a pair of rigs $\mathfrak{R}$ and $\mathfrak{R}'$
we adopt the following notation conventions in this paper
\begin{itemize}
    \item ${\tt L}(\mathfrak{R}) = {\tt A} = [{\tt a}_\alpha,...,{\tt a}_\omega]$
    \item ${\tt L}(\mathfrak{R}') = {\tt A}' = [{\tt a}'_\alpha,...,{\tt a}'_\omega]$
    \item ${\tt C}(\mathfrak{R}) = {\tt Z} = [{\tt z}_\alpha,...,{\tt z}_\omega]$
    \item ${\tt C}(\mathfrak{R}') = {\tt Z}' = [{\tt z}'_\alpha,...,{\tt z}'_\omega]$
\end{itemize}

Additionally, when {\tt Z} and ${\tt Z}'$ are aligned,
we take {\tt z} to denote the more recent of ${\tt z}_\omega$ and ${\tt z}'_\omega$.

Extending this notation we describe $\mathfrak{R}$ and $\mathfrak{R}'$
as being \emph{non-disjoint} iff ${\tt Z} \aligned {\tt Z}'$ and
$\exists{\tt a} : {\tt a}_\alpha \preccurlyeq {\tt a} \preccurlyeq {\tt a}_\omega
\text{ and } {\tt a}'_\alpha \preccurlyeq {\tt a} \preccurlyeq {\tt a}'_\omega$.
Given non-disjoint $\mathfrak{R}$ and $\mathfrak{R}'$ we take {\tt a} to be the common twist on their leadlines.

~~~

The relation needed can now be stated simply: when a pair of rigs are non-disjoint, they are also aligned. 
Equivalently, those rigs are either disjoint or aligned.

\begin{defn}
A pair of rigs that is neither disjoint nor aligned is \emph{misaligned}.
\end{defn}

\begin{defn}
A guild $G$ is \emph{supportive}
if it does not contain any misaligned pairs of rigs.
\end{defn}

Showing a guild is supportive provides 
a unique succession property:
if $G$ is a supportive guild,
and ${\tt Z}$ is a line, then
each twist ${\tt a}_0$ has at most one successor ${\tt a}_1$ 
such that the line $[{\tt a}_0, {\tt a}_1]$ 
is held fast to ${\tt Z}$ by rigs in $G$.

\begin{propn}
\label{unique-canonical-succession}
Given twists ${\tt a}_0$, ${\tt a}_1$, and ${\tt a}'_1$
with ${\tt p}({\tt a}_1) = {\tt p}({\tt a}'_1) = {\tt a}_0$
and non-disjoint rigs $\mathfrak{R}, \mathfrak{R}'$ in a supportive guild $G$
with $[{\tt a}_0, {\tt a}_1]$ a subline of {\tt A}
and $[{\tt a}_0, {\tt a}'_1]$ a subline of ${\tt A}'$;
then ${\tt a}_1 = {\tt a}'_1$.
\end{propn}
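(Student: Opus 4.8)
The plan is to leverage the definition of supportiveness almost directly: a supportive guild contains no misaligned pairs, so the non-disjoint pair $\mathfrak{R}, \mathfrak{R}'$ must in fact be aligned. Aligned means ${\tt L}(\mathfrak{R}) \aligned {\tt L}(\mathfrak{R}')$, i.e. ${\tt A} \aligned {\tt A}'$, so there is an enveloping line containing every twist of both ${\tt A}$ and ${\tt A}'$. The goal is then to show that within such an enveloping line, the twist ${\tt a}_0$ cannot have two distinct successors ${\tt a}_1 \neq {\tt a}'_1$ with ${\tt p}({\tt a}_1) = {\tt p}({\tt a}'_1) = {\tt a}_0$ both lying on it.

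First I would argue that both ${\tt a}_1$ and ${\tt a}'_1$ actually lie on the leadlines in question: since $[{\tt a}_0, {\tt a}_1]$ is a subline of ${\tt A}$ and $[{\tt a}_0, {\tt a}'_1]$ is a subline of ${\tt A}'$, we have ${\tt a}_1 \in {\tt A}$ and ${\tt a}'_1 \in {\tt A}'$, hence both lie on the enveloping line ${\tt E}$ guaranteed by ${\tt A} \aligned {\tt A}'$. Next, recall from the excerpt that the predecessor relation $\prec$ is a tree order, and that a line imposes a linear order via $\preccurlyeq$ on its twists. So ${\tt a}_1$ and ${\tt a}'_1$ are comparable in ${\tt E}$: WLOG ${\tt a}_1 \preccurlyeq {\tt a}'_1$. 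If ${\tt a}_1 = {\tt a}'_1$ we are done, so suppose ${\tt a}_1 \prec {\tt a}'_1$. Then by the definition of $\prec$ there is a chain of previous-links from ${\tt a}'_1$ back down to ${\tt a}_1$; in particular ${\tt a}_1 \prec {\tt p}({\tt a}'_1) $ or ${\tt a}_1 = {\tt p}({\tt a}'_1)$. But ${\tt p}({\tt a}'_1) = {\tt a}_0 = {\tt p}({\tt a}_1)$, so either ${\tt a}_1 \prec {\tt a}_0$ or ${\tt a}_1 = {\tt a}_0$. The first contradicts ${\tt a}_0 = {\tt p}({\tt a}_1) \prec {\tt a}_1$ together with antisymmetry/non-reflexivity of $\prec$; the second contradicts ${\tt a}_0 \prec {\tt a}_1$ likewise. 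Hence ${\tt a}_1 = {\tt a}'_1$.

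I expect the only subtlety — and thus the main thing to get right rather than a genuine obstacle — is the careful invocation of supportiveness to pass from \emph{non-disjoint} to \emph{aligned}: one must check that the pair truly is not disjoint (which is given by hypothesis) and not misaligned (ruled out by $G$ supportive), leaving aligned as the only option per the trichotomy established just before the proposition. The remaining work is a routine consequence of the tree structure of $\prec$ and the fact that both candidate successors share the same previous twist ${\tt a}_0$, so no two-successor configuration can be linearized on a common enveloping line. No appeal to the hitch machinery or to Theorem \ref{half-hitch_sound} is needed here; this proposition is the abstract reformulation that those later results will instantiate.
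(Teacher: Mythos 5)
Your proposal is correct and follows essentially the same route as the paper's own proof: invoke supportiveness to upgrade non-disjoint to aligned, then observe that ${\tt a}_1$ and ${\tt a}'_1$ both lie on the enveloping line of ${\tt A}$ and ${\tt A}'$ and share the same previous, hence coincide. Your final paragraph merely spells out, via the tree structure of $\prec$, the step the paper compresses into ``having the same previous they must therefore be the same twist,'' which is a welcome but not substantively different elaboration.
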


\begin{proof}
Because $G$ is supportive $\mathfrak{R}$ and $\mathfrak{R}'$ cannot be misaligned.
Because $\mathfrak{R}$ and $\mathfrak{R}'$ are neither disjoint nor misaligned,
they must be aligned. Therefore ${\tt A} \aligned {\tt A}'$.

Thus ${\tt a}_1$ and ${\tt a}'_1$ are both twists in the enveloping line of {\tt A} and ${\tt A}'$,
and having the same previous they must be therefore be the same twist.
\end{proof}

A supportive guild guarantees that
the leadline of a rig has the same integrity as its corkline.

\section{$G_H$ is supportive}
\label{G_Hsup}

We previously noted that a half-hitch meets the requirements of a rig: 
it has a corkline bounded by its fastener and hoist;
a leadline bounded by its lead and meet;
and it holds fast on account of relation \ref{hitchchrono},
specifically the fragment
$${\tt f}(H) \tprec {\tt l}(H) \tprec {\tt m}(H) \tprec {\tt h}(H)$$.

\begin{center}
    \includegraphics[width=0.50\columnwidth]{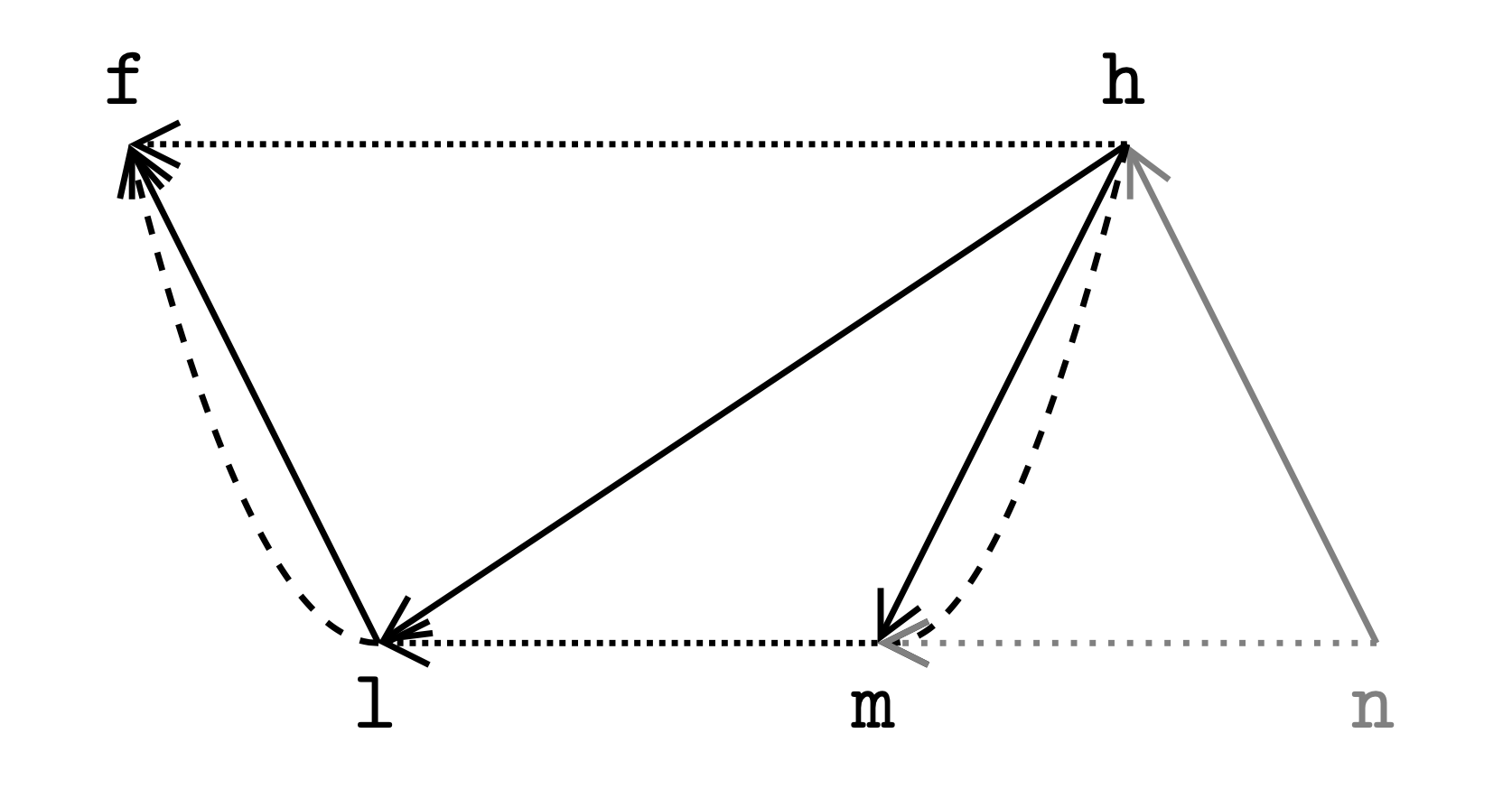}
\end{center}

\begin{defn}
The guild $G_H$ is the guild consisting of all rigs $\mathfrak{R}$ for which there is a half-hitch $H$ satisfying:
\begin{itemize}
    \item ${\tt F}(H) = {\tt L}(\mathfrak{R})$, and
    \item ${\tt T}(H)$ is a subline of ${\tt C}(\mathfrak{R})$.
\end{itemize}
\end{defn}

\noindent
In other words, $G_H$ is the guild of rigs each consisting of a single half-hitch.

\begin{thm}
\label{half-hitch_sound}
$G_H$ is supportive.
\end{thm}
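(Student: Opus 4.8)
The plan is to show that $G_H$ contains no misaligned pair of rigs, i.e.\ that any non-disjoint pair of single-half-hitch rigs is aligned. So I would start with two rigs $\mathfrak{R}, \mathfrak{R}'$ in $G_H$ arising from half-hitches $H, H'$, assume they are non-disjoint, and aim to derive ${\tt L}(\mathfrak{R}) \aligned {\tt L}(\mathfrak{R}')$. Non-disjointness gives two things: the corklines ${\tt Z}, {\tt Z}'$ are aligned, and there is a common twist ${\tt a}$ on both leadlines ${\tt F}(H) = [{\tt l}(H),\dots,{\tt m}(H)]$ and ${\tt F}(H') = [{\tt l}(H'),\dots,{\tt m}(H')]$. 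Since the corklines are aligned and each corkline runs from the fastener to the hoist, ${\tt f}(H)$ and ${\tt f}(H')$ are comparable under $\preccurlyeq$, and likewise ${\tt h}(H), {\tt h}(H')$.

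The key step is to pin down the leads. I would argue that ${\tt l}(H)$ and ${\tt l}(H')$ must be the \emph{same} twist. The common twist ${\tt a}$ satisfies ${\tt l}(H) \preccurlyeq {\tt a}$ and ${\tt l}(H') \preccurlyeq {\tt a}$, so by the tree structure of $\prec$ the leads are aligned with each other, hence comparable; WLOG ${\tt l}(H) \preccurlyeq {\tt l}(H')$. Now ${\tt f}(H) = {\tt t}({\tt l}(H))$ and ${\tt f}(H') = {\tt t}({\tt l}(H'))$. Using that ${\tt l}(H) = {^*\tt p}({\tt m}(H))$ means ${\tt l}(H)$ is fast, and it is the fast predecessor of ${\tt a}$ on the common line; similarly ${\tt l}(H')$ is a fast twist $\preccurlyeq {\tt a}$. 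If ${\tt l}(H) \prec {\tt l}(H')$, then ${\tt l}(H)$ is a fast twist strictly between... — here I need to be careful: ${\tt l}(H')$ need not be $\preccurlyeq {\tt m}(H)$ a priori. The cleaner route is via the hoist. Since ${\tt h}(H), {\tt h}(H')$ are comparable (both on the aligned corklines), say ${\tt h}(H) \preccurlyeq {\tt h}(H')$, and both ${\tt f}(H) \prec {\tt h}(H)$ and ${\tt f}(H') \prec {\tt h}(H')$ with ${\tt f}(H), {\tt f}(H')$ comparable; the defining property of a half-hitch's hoist — that it is \emph{the first} successor of the fastener to associate a value with the lead in its rigging trie — is what forces uniqueness. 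Specifically, if ${\tt f}(H) = {\tt f}(H')$ and ${\tt l}(H) = {\tt l}(H')$, then $r({\tt h}(H))$ and $r({\tt h}(H'))$ both associate a value with the same lead key, and both hoists are successors of the same fastener, so by the ``first successor'' minimality ${\tt h}(H) = {\tt h}(H')$, whence ${\tt m}(H) = {\tt m}(H')$ (the trie value at that key), and the two footlines coincide, giving alignment trivially.

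So the real work is reducing to the case ${\tt f}(H) = {\tt f}(H')$ and ${\tt l}(H) = {\tt l}(H')$. I would get ${\tt l}(H) = {\tt l}(H')$ from: both are fast, both $\preccurlyeq {\tt a}$, and — crucially — each is the fast predecessor of its own meet, and I must show neither meet can ``overshoot'' so as to make the leads differ. Suppose ${\tt l}(H) \prec {\tt l}(H')$. Then ${\tt l}(H) \prec {\tt l}(H') \preccurlyeq {\tt a} \preccurlyeq {\tt m}(H)$, so ${\tt l}(H')$ is a fast twist strictly between ${\tt l}(H)$ and ${\tt m}(H)$, contradicting ${\tt l}(H) = {^*\tt p}({\tt m}(H))$ (which says every twist strictly between is loose). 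Hence ${\tt l}(H) = {\tt l}(H')$, and then ${\tt f}(H) = {\tt t}({\tt l}(H)) = {\tt t}({\tt l}(H')) = {\tt f}(H')$ automatically since ${\tt t}$ is a function on twists.

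\textbf{Anticipated main obstacle.} The delicate point is the appeal to the hoist's minimality (``the first successor of the fastener to associate a value with the lead''): I need to confirm this minimality is along a single line and that ${\tt h}(H), {\tt h}(H')$ — being successors of the \emph{same} fastener ${\tt f}(H)$ and both carrying the lead key — are forced equal rather than merely comparable. If successors of ${\tt f}(H)$ only form a tree (not a line), ``the first such successor'' would need the two hoists to lie on a common line first; the alignment of the corklines ${\tt Z} \aligned {\tt Z}'$, both of which pass through ${\tt f}(H) = {\tt f}(H')$ up to their hoists, should supply exactly that common enveloping line, so the minimality picks out a unique twist on it. Nailing that alignment-of-corklines $\Rightarrow$ hoists-on-a-common-line step, and then the uniqueness of the trie value, is where I expect to spend the most care.
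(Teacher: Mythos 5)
Your overall route is the same as the paper's: reduce the non-disjoint case to showing the two half-hitches share a lead, deduce a shared fastener, and then use the ``first successor of the fastener'' minimality of the hoist to force ${\tt h}(H) = {\tt h}(H')$, hence ${\tt m}(H) = {\tt m}(H')$, so the footlines coincide. The obstacle you flag at the end is resolved exactly as you anticipate: alignment of the corklines puts both hoists on a single enveloping line $[{\tt f}(H),\dots,{\tt z}]$, on which ``earliest twist whose rigging trie keys the lead'' picks out a unique twist; that is precisely how the paper argues it.

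However, one step fails. From ${\tt l}(H) \prec {\tt l}(H') \preccurlyeq {\tt a} \preccurlyeq {\tt m}(H)$ you conclude that ${\tt l}(H')$ lies \emph{strictly} between ${\tt l}(H)$ and ${\tt m}(H)$ and is therefore loose, a contradiction. But that chain permits ${\tt l}(H') = {\tt m}(H)$, in which case no contradiction arises: the meet is fast, and a fast twist is exactly what the next hitch's lead must be. This is not a degenerate corner --- it is precisely the spliceable configuration from the splicing section, where the meet of one half-hitch serves as the lead of the next. In that configuration the leads genuinely differ, so your intermediate claim that ${\tt l}(H) = {\tt l}(H')$ always holds is false; the theorem survives because there $[{\tt a}_\alpha,\dots,{\tt a}_\omega = {\tt a}'_\alpha,\dots,{\tt a}'_\omega]$ is itself an enveloping line, giving ${\tt A} \aligned {\tt A}'$ directly. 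The paper's proof handles this by an explicit case split on the common twist ${\tt a}$ (fast versus loose, and if fast whether it is an endpoint of one or both leadlines), disposing of the ${\tt a} = {\tt a}_\omega = {\tt a}'_\alpha$ case by concatenation before reducing everything else to ${\tt a}_\alpha = {\tt a}'_\alpha$. Adding that case closes your gap; the rest of your argument then goes through.
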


\begin{proof}
Let $\mathfrak{R}$ and $\mathfrak{R}'$ be rigs in $G_H$.
Disjoint rigs can't be misaligned,
therefore we take $\mathfrak{R}$ and $\mathfrak{R}'$ to be non-disjoint
and show they are aligned.

Recall that the construction of a hitch requires
${\tt a}_\alpha$, ${\tt a}'_\alpha$, ${\tt a}_\omega$, and ${\tt a}'_\omega$
to be fast, and all the other twists in {\tt A} and ${\tt A}'$ to be loose.
Given {\tt a} as the common twist between {\tt A} and ${\tt A}'$,
we note that because ${\tt a}_\alpha \preccurlyeq {\tt a}$, 
and ${\tt a}'_\alpha \preccurlyeq {\tt a}$, 
we have ${\tt a}_\alpha \aligned {\tt a}'_\alpha$,
and can assume without loss of generality that ${\tt a}_\alpha \preccurlyeq {\tt a}'_\alpha$.

If {\tt a} is fast, then one of the following must be true:
\begin{itemize}
    \item ${\tt a} = {\tt a}_\omega = {\tt a}'_\omega$,
    in which case ${\tt a}_\alpha = {^*\tt p}({\tt a}_\omega) = {^*\tt p}({\tt a}'_\omega) = {\tt a}'_\alpha$,
    giving that ${\tt A} = {\tt A}'$, and alignment is trivial;
    \item ${\tt a} = {\tt a}_\omega = {\tt a}'_\alpha$,
    in which case $[{\tt a}_\alpha,...,{\tt a}_\omega = {\tt a}'_\alpha,...,{\tt a}'_\omega]$
    is an enveloping line for {\tt A} and ${\tt A}'$, providing alignment; or
    \item ${\tt a} = {\tt a}_\alpha = {\tt a}'_\alpha$, which is examined below.
\end{itemize}

If {\tt a} is loose, then we get ${\tt a}_\alpha = {^*\tt p}({\tt a}) = {\tt a}'_\alpha$,
so we need only demonstrate alignment when ${\tt a}_\alpha = {\tt a}'_\alpha$.

Because $\mathfrak{R}$ is in $G_H$, there is a half-hitch $H$ with
\begin{itemize}
    \item ${\tt l}(H) = {\tt a}_\alpha$
    \item ${\tt m}(H) = {\tt a}_\omega$
    \item ${\tt z}_\alpha \preccurlyeq {\tt f}(H) \prec {\tt h}(H) \preccurlyeq {\tt z}_\omega \preccurlyeq {\tt z}$
\end{itemize}

Similarly, we get $H'$ from $\mathfrak{R}'$ with
\begin{itemize}
    \item ${\tt l}(H') = {\tt a}'_\alpha$
    \item ${\tt m}(H') = {\tt a}'_\omega$
    \item ${\tt z}'_\alpha \preccurlyeq {\tt f}(H') \prec {\tt h}(H') \preccurlyeq {\tt z}'_\omega \preccurlyeq {\tt z}$
\end{itemize}

Because the fastener of a hitch is equal to the tether of the hitch's lead, we have
${\tt f}(H) = {\tt t}({\tt a}_\alpha) = {\tt t}({\tt a}'_\alpha) = {\tt f}(H')$.
Thus, the hoists ${\tt h}(H)$ and ${\tt h}(H')$
are both defined to be the earliest twist ${\tt z}_{\tt h}$
on the line $[{\tt f}(H),...,{\tt z}]$
for which the rigging trie has ${\tt a}_\alpha$ as a key.
Therefore they are the same twist, with the same rigging trie,
which associates to ${\tt a}_\alpha$
the same value ${\tt m}(H) = {\tt m}(H')$.

Consequently we have that
${\tt A} = [{\tt l}(H),...,{\tt m}(H)] = [{\tt l}(H'),...,{\tt m}(H')] = {\tt A}'$,
from which it follows trivially that ${\tt A} \aligned {\tt A}'$.
\end{proof}


\section{Splicing}

To construct rigs that span longer lines requires a composition operation
that allows a rig's leadline to be lengthened by one hitch\footnote{In order to more easily describe
the relations between the consecutive hitches,
we define our splice to perform its lengthening on the past side rather than the future side.
It may be more practical for implementations to perform the extensions on the other side,
as new hitches become available.}.  
In particular, the meet of a hitch is a fast twist, and therefore has its own tether and may be used as the lead in a consecutive hitch.


We define such an operation, called \emph{splicing},
and use it to construct rigs whose leadline {\tt L} has a length $l({\tt L}) > 1$.
To simplify descriptions of splicing we define the length of a rig to be the length of its leadline,
i.e. $l(\mathfrak{R}) = l({\tt L}(\mathfrak{R}))$. 

\begin{defn}
A pair of rigs $(\mathfrak{R}_0, \mathfrak{R}_1)$ is said to be \emph{spliceable} iff
\begin{itemize}
    \item $l(\mathfrak{R}_0) = 1$
    \item ${\tt L}(\mathfrak{R}_0) = [{\tt a}_0,...,{\tt a}_1]$
    \item there is a half-hitch $H_0$ in $\mathfrak{R}_0$ with ${\tt l}(H_0) = {\tt a}_0$ and ${\tt m}(H_0) = {\tt a}_1$
    \item $l(\mathfrak{R}_1) = n \geq 1$
    \item ${\tt L}(\mathfrak{R}_1) = [{\tt a}_1,{\tt a}_2,...,{\tt a}_{n+1}]$
    \item there is a half-hitch $H_1$ in $\mathfrak{R}_1$ with
    ${\tt l}(H_1) = {\tt a}_1$ and ${\tt m}(H_1) = {\tt a}_2$
    \item there is a hitch $H^*_0$ which adds post ${\tt p}(H^*_0) = {\tt a}_2$ to $H_0$ 
    \item ${\tt C}(\mathfrak{R}_0) \aligned {\tt C}(\mathfrak{R}_1)$
\end{itemize}

\end{defn}

\begin{defn}
The \emph{splice} of a spliceable pair of rigs $(\mathfrak{R}_0, \mathfrak{R}_1)$,
denoted $\mathfrak{R}_0 \Psi \mathfrak{R}_1$,
is the structure consisting of 
the union of each rig's collection of twists.
Its corkline is the 
minimal enveloping line of ${\tt C}(\mathfrak{R}_0)$ 
and ${\tt C}(\mathfrak{R}_1)$,
and its leadline is 
the concatenation of their leadlines.
\end{defn}

\begin{propn}
If ($\mathfrak{R}_0$, $\mathfrak{R}_1$) is a splicable pair of rigs,
then their splice $\mathfrak{S} = \mathfrak{R}_0 \Psi \mathfrak{R}_1$ is also a rig.
\end{propn}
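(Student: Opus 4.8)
The plan is to verify directly that $\mathfrak{S}$ satisfies the definition of a rig: it must be a collection of twists containing a corkline ${\tt C}(\mathfrak{S})$, a leadline ${\tt L}(\mathfrak{S})$, and enough additional twists to demonstrate that ${\tt L}(\mathfrak{S})$ holds fast to ${\tt C}(\mathfrak{S})$. Since $\mathfrak{S}$ is defined as the union of the twist collections of $\mathfrak{R}_0$ and $\mathfrak{R}_1$, and each $\mathfrak{R}_i$ already contains all the twists of every line it mentions, the first thing I would check is that the claimed corkline and leadline are genuinely well-defined lines sitting inside $\mathfrak{S}$. For the leadline, ${\tt L}(\mathfrak{S}) = [{\tt a}_0, {\tt a}_1, {\tt a}_2, \dots, {\tt a}_{n+1}]$ is the concatenation of ${\tt L}(\mathfrak{R}_0) = [{\tt a}_0, {\tt a}_1]$ and ${\tt L}(\mathfrak{R}_1) = [{\tt a}_1, \dots, {\tt a}_{n+1}]$, which is consecutive because both share the twist ${\tt a}_1$; all its twists lie in $\mathfrak{S}$. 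For the corkline, I would invoke the hypothesis ${\tt C}(\mathfrak{R}_0) \aligned {\tt C}(\mathfrak{R}_1)$, so a minimal enveloping line exists; I then need to argue that this enveloping line is contained in the union of twists — which holds because the enveloping line of two aligned lines is a subline of the longer one together with a prefix of the other, hence built entirely from twists already present in $\mathfrak{R}_0 \cup \mathfrak{R}_1$.

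The substantive step is to show that ${\tt L}(\mathfrak{S})$ holds fast to ${\tt C}(\mathfrak{S})$, i.e. writing ${\tt C}(\mathfrak{S}) = [{\tt c}_\alpha, \dots, {\tt c}_\omega]$ and ${\tt L}(\mathfrak{S}) = [{\tt a}_0, \dots, {\tt a}_{n+1}]$, that ${\tt c}_\alpha \tprec {\tt a}_0 \tprec {\tt a}_{n+1} \tprec {\tt c}_\omega$. The middle relation ${\tt a}_0 \tprec {\tt a}_{n+1}$ is immediate since ${\tt a}_0 \prec {\tt a}_{n+1}$ along the leadline, which implies $\tprec$. For the outer two relations I would use the fact that each $\mathfrak{R}_i$ is a rig, so ${\tt L}(\mathfrak{R}_0)$ holds fast to ${\tt C}(\mathfrak{R}_0)$ and ${\tt L}(\mathfrak{R}_1)$ holds fast to ${\tt C}(\mathfrak{R}_1)$. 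The key observation is that ${\tt c}_\alpha$, the start of the enveloping corkline, is whichever of the two cork-starts comes first causally; and since ${\tt a}_0$ is the lead of $H_0$ with fastener ${\tt t}({\tt a}_0)$ lying on ${\tt C}(\mathfrak{R}_0)$, we get ${\tt c}_\alpha \preccurlyeq$ (start of ${\tt C}(\mathfrak{R}_0)$) $\tprec {\tt a}_0$ in the case ${\tt C}(\mathfrak{R}_0)$ is the one extending further back, and a symmetric chase using alignment in the other case. Dually, ${\tt c}_\omega$ is the later of the two cork-ends; I need to show ${\tt a}_{n+1} \tprec {\tt c}_\omega$. Here I would use that ${\tt a}_{n+1} = {\tt a}_\omega({\tt L}(\mathfrak{R}_1))$ and $\mathfrak{R}_1$ holds fast to ${\tt C}(\mathfrak{R}_1)$, giving ${\tt a}_{n+1} \tprec {\tt z}'_\omega$ where ${\tt z}'_\omega$ is the end of ${\tt C}(\mathfrak{R}_1)$; then since ${\tt c}_\omega \succcurlyeq {\tt z}'_\omega$ on the enveloping line (as ${\tt c}_\omega$ is the later of the two ends and the two corklines are aligned), transitivity of $\tprec$ through $\preccurlyeq$ closes the chain.

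The main obstacle I anticipate is the corkline bookkeeping: the minimal enveloping line of two aligned lines can have its start coming from either constituent and its end from either constituent, and these need not be the same constituent, so the argument splits into cases depending on which of ${\tt z}_\alpha, {\tt z}'_\alpha$ is earlier and which of ${\tt z}_\omega, {\tt z}'_\omega$ is later. I would handle this by first establishing the general lemma-like fact that for aligned lines ${\tt Z} \aligned {\tt Z}'$ the minimal enveloping line runs from the causally-earlier of ${\tt z}_\alpha, {\tt z}'_\alpha$ to the causally-later of ${\tt z}_\omega, {\tt z}'_\omega$ (using that alignment forces all four endpoints to be pairwise comparable via $\preccurlyeq$), and then observing that in every case the needed inequalities ${\tt c}_\alpha \tprec {\tt a}_0$ and ${\tt a}_{n+1} \tprec {\tt c}_\omega$ follow from the holding-fast property of whichever $\mathfrak{R}_i$ supplies that endpoint, combined with transitivity. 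The spliceability conditions about $H_0^*$ and the half-hitches $H_0, H_1$ are not strictly needed for the holding-fast verification itself — they guarantee that the hitches genuinely chain up, which matters for later results about which guild $\mathfrak{S}$ belongs to — so for this proposition I expect to use only $l(\mathfrak{R}_0)=1$, the leadline descriptions, and the corkline alignment hypothesis.
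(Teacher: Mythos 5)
Your proposal is correct and follows essentially the same route as the paper: chain ${\tt c}_\alpha \preccurlyeq {\tt z}_0 \tprec {\tt a}_0 \tprec {\tt a}_{n+1} \tprec {\tt z}_{n+1} \preccurlyeq {\tt c}_\omega$ using the holding-fast property of $\mathfrak{R}_0$ at the early end and of $\mathfrak{R}_1$ at the late end, with the middle relation coming from the concatenated leadline being a line. The case analysis you anticipate over which corkline supplies each endpoint of the enveloping line collapses, since the enveloping line's start always precedes-or-equals the start of ${\tt C}(\mathfrak{R}_0)$ and its end always succeeds-or-equals the end of ${\tt C}(\mathfrak{R}_1)$, which is exactly how the paper dispatches it.
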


\begin{proof}
Taking $\mathfrak{S} = \mathfrak{R}_0 \Psi \mathfrak{R}_1$,
${\tt z}_\alpha$ to be the earliest twist in ${\tt C}(\mathfrak{S})$,
and ${\tt z}_0$ to be the earliest twist in ${\tt C}(\mathfrak{R}_0)$,
it follows from $\mathfrak{R}_0$ being a rig
that ${\tt z}_\alpha \tprec {\tt z}_0 \tprec {\tt a}_0$.

Similarly taking ${\tt z}_\omega$ to be the latest twist in ${\tt C}(\mathfrak{S})$,
and ${\tt z}_{n+1}$ to be the latest twist in ${\tt C}(\mathfrak{R}_1)$,
we get ${\tt a}_{n+1} \tprec {\tt z}_{n+1} \tprec {\tt z}_\omega$.

Using the fact that ${\tt L}(\mathfrak{S})$ is a line to combine and reduce these
yields ${\tt z}_\alpha \tprec {\tt a}_0 \tprec {\tt a}_{n+1} \tprec {\tt z}_\omega$.

Thus the leadline ${\tt L}(\mathfrak{S})$ is held fast
to the corkline ${\tt C}(\mathfrak{S})$
by the splice $\mathfrak{S} = \mathfrak{R}_0 \Psi \mathfrak{R}_1$,
which is a rig.
\end{proof}

To illustrate this, below is a diagram of a pair of rigs being spliced, which also shows the hash inclusions of the half-hitch forming the bottom of $\mathfrak{R}_0$.
\begin{center}
    \includegraphics[width=0.90\columnwidth]{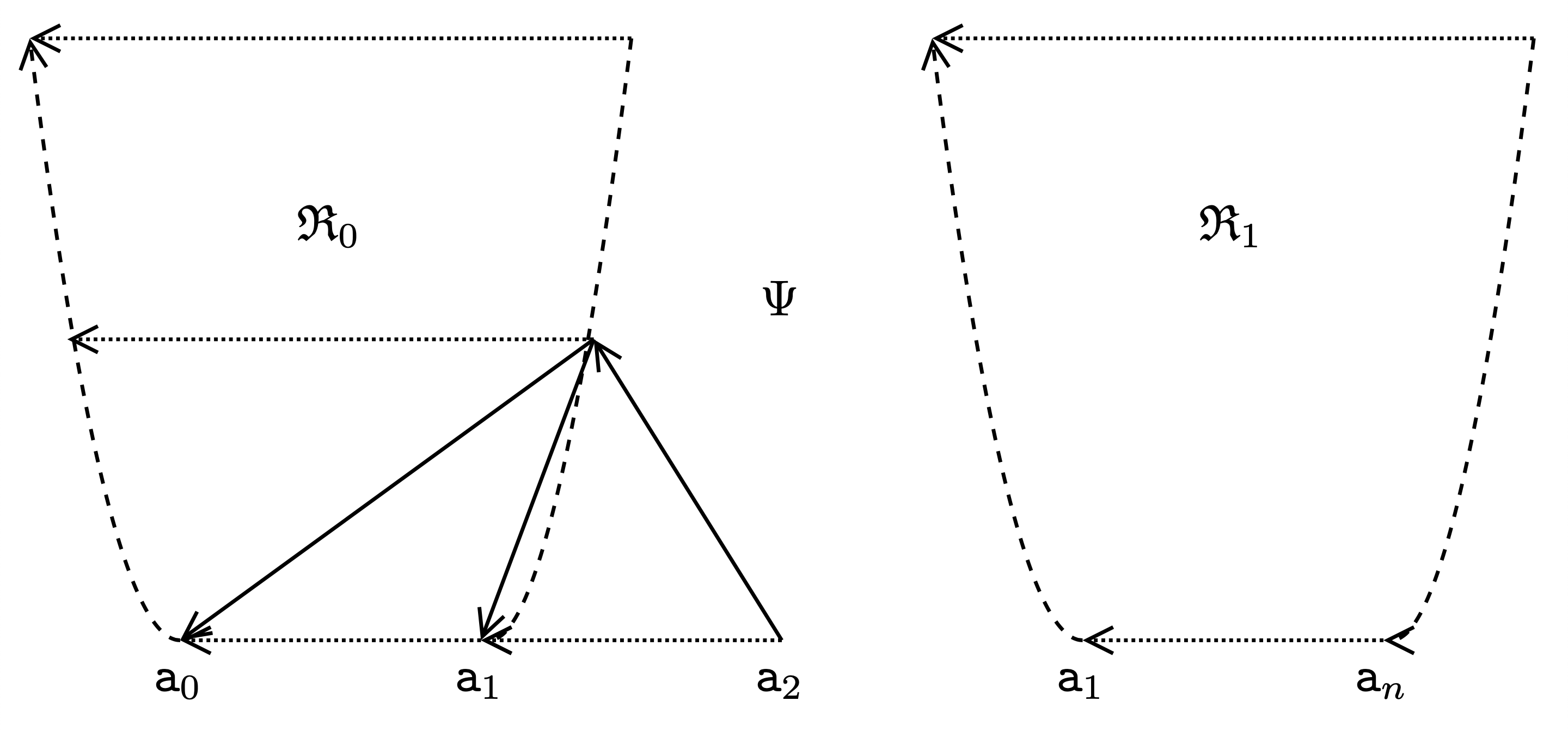}
\end{center}

And this is the rig resulting from splicing them.

\begin{center}
    \includegraphics[width=0.75\columnwidth]{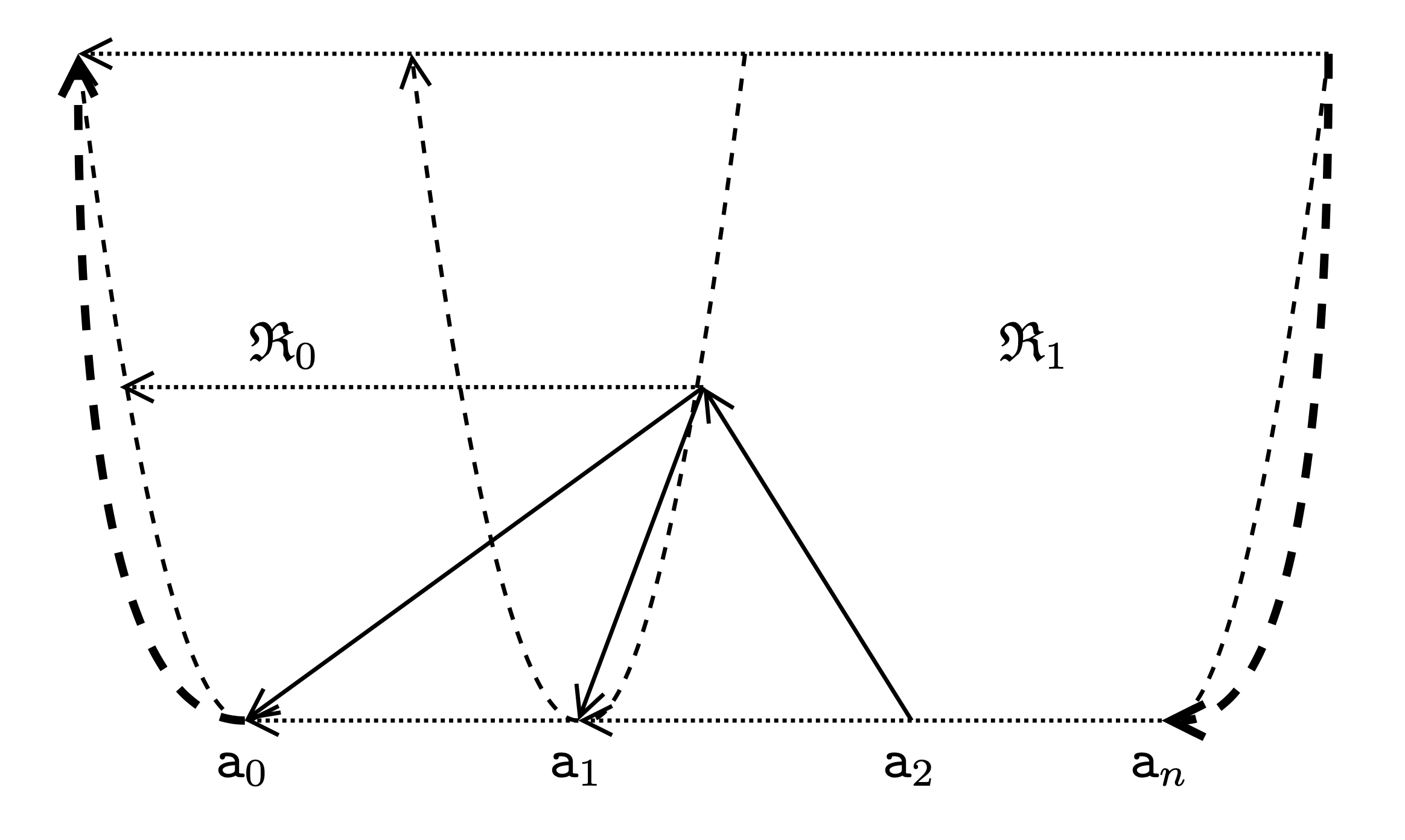}
\end{center}

As a more concrete example, here is a hitch spliced to a half-hitch.

\begin{center}
    \includegraphics[width=0.60\columnwidth]{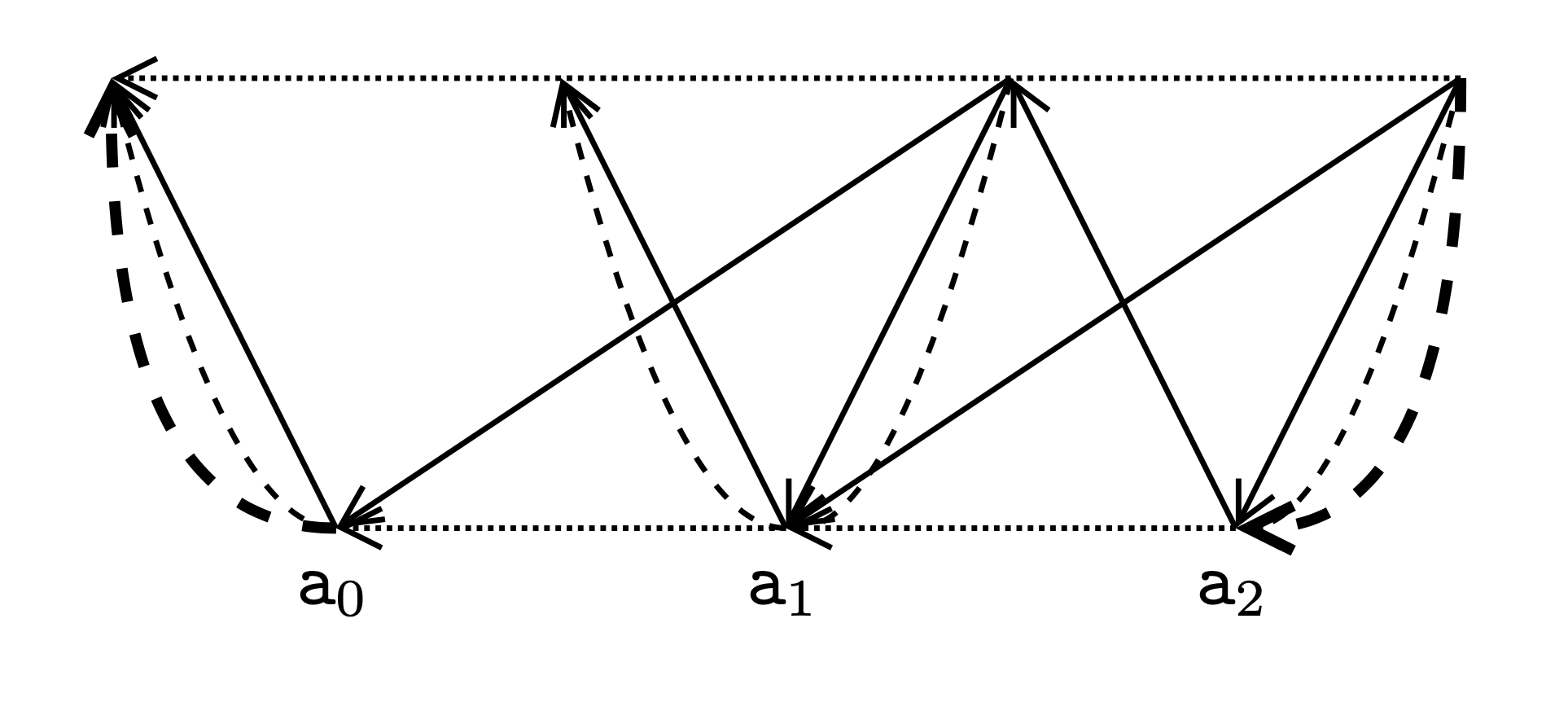}
\end{center}

\section{$\Psi(G)$ is supportive}
\label{psiG_is_supportive}

We introduce a guild transformation operator, $\Psi(G)$, which forms the closure of a guild under splicing.

\begin{defn}
Given a guild $G$ where all rigs have length 1,
let \emph{$\Psi(G)$} be the guild consisting of any rig that is either:
\begin{itemize}
    \item a rig from $G$, or
    \item obtained by splicing a rig $\mathfrak{R}_0$ from $G$ to a rig $\mathfrak{R}_1$ from $\Psi(G)$.
\end{itemize}
\end{defn}

 
To prove that this operation preserves the property of supportiveness,
we start with the more restrictive case where the leadlines begin with the same twist.

\begin{lemma}
\label{splicing-aligned}
Given a supportive guild $G$,
and rigs $\mathfrak{R}$ and $\mathfrak{R}'$ from $\Psi(G)$
with ${\tt a}_\alpha = {\tt a}'_\alpha$,
then $\mathfrak{R}$ and $\mathfrak{R}'$ are not misaligned.
\end{lemma}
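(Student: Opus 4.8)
The plan is to induct on the combined length $l(\mathfrak{R}) + l(\mathfrak{R}')$, peeling off the leading hitch from each rig via the recursive structure of $\Psi(G)$. The base case is when both rigs have length $1$: then both are rigs of $G$ (the only length-$1$ rigs in $\Psi(G)$), and since $G$ is supportive and the leadlines share a twist (${\tt a}_\alpha = {\tt a}'_\alpha$, so they are non-disjoint unless the corklines fail to align, but then they are disjoint and done), $\mathfrak{R}$ and $\mathfrak{R}'$ are aligned — directly by the hypothesis that $G$ is supportive.

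For the inductive step, suppose at least one of the rigs, say $\mathfrak{R}$, has length $n \geq 2$. By the definition of $\Psi(G)$, $\mathfrak{R} = \mathfrak{R}_0 \Psi \mathfrak{R}_1$ with $\mathfrak{R}_0 \in G$ of length $1$ and $\mathfrak{R}_1 \in \Psi(G)$ of length $n-1$, so ${\tt L}(\mathfrak{R}) = [{\tt a}_\alpha, {\tt a}_\beta, \ldots, {\tt a}_\omega]$ where ${\tt L}(\mathfrak{R}_0) = [{\tt a}_\alpha, {\tt a}_\beta]$ and ${\tt L}(\mathfrak{R}_1)$ starts at ${\tt a}_\beta$. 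The key step is to show that $\mathfrak{R}'$ (or its leading sub-rig) also ``passes through'' ${\tt a}_\beta$: since ${\tt a}_\alpha = {\tt a}'_\alpha$ is the lead of the bottom half-hitch $H_0$ of $\mathfrak{R}_0$, its tether ${\tt f}(H_0) = {\tt t}({\tt a}_\alpha)$ is determined by ${\tt a}_\alpha$ alone, and the same argument as in Theorem~\ref{half-hitch_sound} — the hoist is the \emph{first} successor of the shared fastener on the (aligned, since both corklines are aligned with the more recent ${\tt z}$) corkline to key ${\tt a}_\alpha$ — forces the meet of the leading half-hitch of $\mathfrak{R}'$ to equal ${\tt a}_\beta$. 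Hence ${\tt L}(\mathfrak{R}')$ also begins $[{\tt a}_\alpha, {\tt a}_\beta, \ldots]$. If $l(\mathfrak{R}') = 1$, then ${\tt L}(\mathfrak{R}') = [{\tt a}_\alpha, {\tt a}_\beta] = {\tt L}(\mathfrak{R}_0)$ is a subline of ${\tt L}(\mathfrak{R})$, giving alignment directly. Otherwise $\mathfrak{R}' = \mathfrak{R}'_0 \Psi \mathfrak{R}'_1$ with ${\tt L}(\mathfrak{R}'_1)$ starting at ${\tt a}_\beta$; apply the induction hypothesis to $\mathfrak{R}_1$ and $\mathfrak{R}'_1$ (combined length strictly smaller, and ${\tt a}_\beta$ is their common starting twist) to conclude ${\tt L}(\mathfrak{R}_1) \aligned {\tt L}(\mathfrak{R}'_1)$; then prepend the common segment $[{\tt a}_\alpha, {\tt a}_\beta]$ to the enveloping line to get an enveloping line for ${\tt L}(\mathfrak{R})$ and ${\tt L}(\mathfrak{R}')$.

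I expect the main obstacle to be the bookkeeping around corklines in the inductive step: to invoke the Theorem~\ref{half-hitch_sound}-style argument identifying the two leading hoists, I need the corklines of $\mathfrak{R}_0$ and of the leading sub-rig of $\mathfrak{R}'$ to be aligned and to share the fastener's successor-line up to the hoist. Non-disjointness of $\mathfrak{R}$ and $\mathfrak{R}'$ only gives ${\tt Z} \aligned {\tt Z}'$ for the \emph{outer} corklines; I must descend to the embedded half-hitch toplines, argue they are sublines of the respective corklines (which the splice definition guarantees, since splicing takes the minimal enveloping line of the component corklines), and check that both contain ${\tt f}(H_0)$ and hence agree on the relevant initial segment by uniqueness of predecessors. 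Handling the case analysis on whether ${\tt a}_\beta$ is fast or whether the leadlines might diverge before ${\tt a}_\beta$ (they cannot, since ${\tt a}_\beta$ is forced as above) is the subtle part; everything else is routine transitivity of $\tprec$ and concatenation of enveloping lines.
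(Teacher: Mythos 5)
Your induction skeleton (peel off the leading length-$1$ sub-rig, recurse on the tails, prepend the common segment to the enveloping line) matches the paper's, but the step where you identify the second twist ${\tt a}_\beta$ on both leadlines has a genuine gap. You propose to re-run the Theorem~\ref{half-hitch_sound} argument: the shared fastener ${\tt t}({\tt a}_\alpha)$ determines a unique hoist as the first successor of the fastener \emph{on the corkline} to key ${\tt a}_\alpha$, forcing the two leading meets to coincide. That argument requires the topline of the bottom half-hitch to be a subline of the rig's corkline, and you assert the splice definition guarantees this. It does not: the splice takes the minimal enveloping line of the \emph{component corklines}, but the lemma is stated for an arbitrary supportive guild $G$ of length-$1$ rigs, and such a $G$ may consist of lashed-up rigs whose corklines sit far above the half-hitch toplines (this is exactly the situation in which the lemma is later invoked inside the proof that $G_\uparrow$ is supportive). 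In that setting ${\tt f}(H_0)$ lies on an intermediate tetherline, not on ${\tt C}(\mathfrak{R}_0)$, alignment of the outer corklines tells you nothing directly about the two hoists, and the ``first successor of the fastener'' identification collapses.

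The repair — and the paper's actual move — is to use the supportiveness of $G$ as a black box rather than reconstructing it from hitch mechanics. The leading sub-rigs $\mathfrak{R}_0$ and $\mathfrak{R}'_0$ are both rigs of $G$; their leadlines share ${\tt a}_\alpha$, and their corklines both end at predecessors of ${\tt z}$ and hence are aligned by the tree structure of $\prec$, so the pair is non-disjoint. Supportiveness of $G$ then gives ${\tt A}_0 \aligned {\tt A}'_0$, and two aligned length-$1$ fast lines starting at the same twist are equal, which yields ${\tt a}_1 = {\tt a}'_1$ without ever mentioning fasteners or hoists. With that substitution (and the same black-box appeal in your $l(\mathfrak{R}')=1$ subcase), the rest of your argument — recursing on $\mathfrak{R}_1, \mathfrak{R}'_1$ with common start ${\tt a}_1$ and extending the enveloping line back to ${\tt a}_\alpha$ — goes through; the choice of inducting on $l + l'$ rather than on $\min(l,l')$ is immaterial.
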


\begin{proof}
Since disjoint rigs cannot be misaligned, we need only consider the non-disjoint case.

Letting $l$ and $l'$ be the lengths of $\mathfrak{R}$ and $\mathfrak{R}'$ respectively,
we assume without loss of generality that $l' \geq l$.

We prove by induction on $l$ that ${\tt A} \aligned {\tt A}'$ for all $l \geq 1$.

~~~

\textbf{Base case ($l = 1$):}

In this case $\mathfrak{R}$, being of length 1, is a rig from the supportive guild $G$,
and by definition cannot be misaligned with any other rig from $G$.

If $l' = 1$, then $\mathfrak{R}'$ is also a rig from $G$, so ${\tt A} \aligned {\tt A}'$.

If $l' > 1$, then $\mathfrak{R}'$ is obtained by splicing the rig $\mathfrak{R}'_0$
to the rig $\mathfrak{R}'_1$ at the common twist ${\tt a}'_1$ in their leadlines.
Further, $\mathfrak{R}'_0$ is a rig from $G$ 
with ${\tt A}'_0 = L(\mathfrak{R}'_0)$.

Since $\mathfrak{R}$ and $\mathfrak{R}'_0$ are both rigs in $G$
(and $G$ is supportive), they cannot be misaligned.

Further, both ${\tt A}'_0$ and {\tt A} begin with ${\tt a}_\alpha$ and have length 1,
so not only are they aligned, but in fact equal.

Then because ${\tt A}'_0$ is a subline of ${\tt A}'$
it follows that ${\tt A}' \aligned {\tt A}$.

~~~

\textbf{Induction step ($l > 1$):}

In this case, $\mathfrak{R}$ is obtained by splicing the rig $\mathfrak{R}_0 \in G$
to the rig $\mathfrak{R}_1$ at their common twist ${\tt a}_1$.
Likewise, $\mathfrak{R}'$ is obtained by splicing the rig $\mathfrak{R}'_0 \in G$
to the rig $\mathfrak{R}'_1$ at their common twist ${\tt a}'_1$.

Now, both $\mathfrak{R}_0$ and $\mathfrak{R}'_0$ are rigs from $G$,
with ${\tt A}_0 = {\tt L}(\mathfrak{R}_0)$ and
${\tt A}'_0 = {\tt L}(\mathfrak{R}'_0)$ both beginning at ${\tt a}_\alpha$.
Recalling that {\tt z} denotes the most recent of ${\tt z}_\omega$ and ${\tt z}'_\omega$,
both ${\tt C}(\mathfrak{R}_0)$ and ${\tt C}(\mathfrak{R}'_0)$
end at predecessors of {\tt z}
and are therefore aligned with each other.
Thus $\mathfrak{R}_0$ and $\mathfrak{R}'_0$ are non-disjoint.

$G$ being supportive then gives that ${\tt A}_0 \aligned {\tt A}'_0$.

Because ${\tt A}_0$ and ${\tt A}'_0$ are aligned,
both begin at ${\tt a}_\alpha$,
and both have a length of $1$,
it follows that ${\tt A}_0 = {\tt A}'_0$
and by extension that ${\tt a}_1 = {\tt a}'_1$.

Thus we have that the leadlines ${\tt A}_1 = {\tt L}(\mathfrak{R}_1)$ and
${\tt A}'_1 = {\tt L}(\mathfrak{R}'_1)$ both begin with ${\tt a}_1$.

The same argument used to show alignment of ${\tt C}(\mathfrak{R}_0)$ and ${\tt C}(\mathfrak{R}'_0)$
can also be used on ${\tt C}(\mathfrak{R}_1)$ and ${\tt C}(\mathfrak{R}'_1)$
to get that $\mathfrak{R}_1$ and $\mathfrak{R}'_1$ are non-disjoint.

So our induction hypothesis provides that $\mathfrak{R}_1$ and $\mathfrak{R}'_1$,
being non-disjoint rigs from $\Psi(G)$ with length less than $l$, must be aligned,
along with their leadlines ${\tt A}_1 \aligned {\tt A}'_1$.
Further, because {\tt A} and ${\tt A}'$ can be obtained by tracing ${\tt A}_1$ and ${\tt A}'_1$
back to ${\tt a}_\alpha$, 
it follows that ${\tt A} \aligned {\tt A}'$.
\end{proof}

This more restrictive case being proven, we can now move to the general case.

\begin{thm}
\label{splicing-sound}
Given a supportive guild $G$
where all rigs have length 1,
then $\Psi(G)$ is also supportive.
\end{thm}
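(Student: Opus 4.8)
The plan is to reduce the general case to Lemma~\ref{splicing-aligned} by extracting, from the rig whose leadline starts earlier, a constituent sub-rig whose leadline begins exactly where the other's does. As always, since disjoint rigs are never misaligned, I would take $\mathfrak{R}$ and $\mathfrak{R}'$ in $\Psi(G)$ to be non-disjoint, fix the common leadline twist ${\tt a}$ and the aligned corklines ${\tt Z} \aligned {\tt Z}'$, and (since ${\tt a}_\alpha \preccurlyeq {\tt a}$ and ${\tt a}'_\alpha \preccurlyeq {\tt a}$ force ${\tt a}_\alpha \aligned {\tt a}'_\alpha$) assume without loss of generality that ${\tt a}_\alpha \preccurlyeq {\tt a}'_\alpha \preccurlyeq {\tt a} \preccurlyeq {\tt a}_\omega$.

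First I would dispatch the degenerate endpoints. If ${\tt a}'_\alpha = {\tt a}_\alpha$, then Lemma~\ref{splicing-aligned} applies to $\mathfrak{R}$ and $\mathfrak{R}'$ directly. If ${\tt a}'_\alpha = {\tt a}_\omega$, then ${\tt a} = {\tt a}_\omega = {\tt a}'_\alpha$ and the concatenation $[{\tt a}_\alpha,\dots,{\tt a}_\omega,\dots,{\tt a}'_\omega]$ is an enveloping line for ${\tt A}$ and ${\tt A}'$, giving alignment outright, exactly as in the corresponding case of Theorem~\ref{half-hitch_sound}. Otherwise ${\tt a}_\alpha \prec {\tt a}'_\alpha \prec {\tt a}_\omega$. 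Here I would record the structural fact that, for any rig in $\Psi(G)$, the fast twists of its leadline are exactly the ``joints'' of any splice decomposition: each constituent rig comes from $G$ and hence has a length-$1$ leadline, whose only fast twists are its two endpoints, so concatenating these leadlines leaves the joints as the only fast twists. Since ${\tt a}'_\alpha$ is the lead of a half-hitch it is fast, and it lies strictly between the fast twists ${\tt a}_\alpha$ and ${\tt a}_\omega$ on ${\tt A}$; so $l(\mathfrak{R}) > 1$, and ${\tt a}'_\alpha$ is an interior joint of $\mathfrak{R}$.

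Next I would peel the decomposition of $\mathfrak{R}$ down to that joint. Unwinding the recursive definition of $\Psi(G)$, $\mathfrak{R} = \mathfrak{R}_0 \Psi \mathfrak{R}_1$ with $\mathfrak{R}_0 \in G$ and $\mathfrak{R}_1 \in \Psi(G)$, where ${\tt L}(\mathfrak{R}_1)$ begins at the first joint after ${\tt a}_\alpha$ and ${\tt C}(\mathfrak{R}_1)$ is a subline of ${\tt C}(\mathfrak{R})$ (the splice's corkline being the minimal enveloping line of the two corklines). Iterating this — a short induction on length — produces a rig $\tilde{\mathfrak{R}} \in \Psi(G)$ whose leadline is precisely $[{\tt a}'_\alpha,\dots,{\tt a}_\omega]$ and whose corkline is a subline of ${\tt Z}$. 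Then $\tilde{\mathfrak{R}}$ and $\mathfrak{R}'$ are non-disjoint: their leadlines both contain ${\tt a}$, and ${\tt C}(\tilde{\mathfrak{R}})$, being a subline of ${\tt Z}$, is aligned with ${\tt Z}'$. Since their leadlines both begin at ${\tt a}'_\alpha$, Lemma~\ref{splicing-aligned} gives ${\tt L}(\tilde{\mathfrak{R}}) \aligned {\tt A}'$. Finally, as ${\tt A}$ is just the backward extension of ${\tt L}(\tilde{\mathfrak{R}})$ to ${\tt a}_\alpha$, prepending $[{\tt a}_\alpha,\dots,{\tt a}'_\alpha]$ to a common enveloping line of ${\tt L}(\tilde{\mathfrak{R}})$ and ${\tt A}'$ yields a common enveloping line of ${\tt A}$ and ${\tt A}'$, so $\mathfrak{R}$ and $\mathfrak{R}'$ are aligned.

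I expect the main obstacle to be making the peeling step precise: one must check that removing a prefix of a splice chain leaves a genuine element of $\Psi(G)$ with the claimed leadline and with a corkline that is still a subline of the original's, which is where the recursive structure of $\Psi(G)$ and the minimal-enveloping-line behaviour of splicing on corklines must be used with care. A secondary point of care is the structural lemma that the leadline's fast twists are exactly the splice joints; this relies essentially on every rig of $G$ having length exactly $1$, hence a leadline whose interior is entirely loose.
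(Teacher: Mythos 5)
Your proposal is correct and follows essentially the same route as the paper: reduce to the case of coinciding leadline starts (Lemma~\ref{splicing-aligned}) by peeling the splice decomposition of the earlier-starting rig back along its fast twists, then extend the enveloping line backward by the removed prefix. The only difference is organizational — the paper peels one joint at a time via induction on $n$ where ${\tt a}_\alpha = ({^*\tt p})^n({\tt a}'_\alpha)$, invoking its own induction hypothesis on the tail $\mathfrak{R}_1$, whereas you extract the full suffix rig $\tilde{\mathfrak{R}}$ in one inner induction and apply the lemma once; your flagged ``obstacle'' is in fact immediate, since the tail of a splice is in $\Psi(G)$ by the recursive definition of $\Psi(G)$ and its corkline is a subline of the splice's minimal enveloping corkline by construction.
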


\begin{proof}
Since disjoint rigs are trivially not misaligned,
we take the rigs $\mathfrak{R}$ and $\mathfrak{R}'$ in $\Psi(G)$ to be non-disjoint
and show ${\tt A} \aligned {\tt A}'$.

We assume without loss of generality that ${\tt a}_\alpha \preccurlyeq {\tt a}'_\alpha$,
and because leadlines in $\Psi(G)$ must begin with fast twists, 
we note that this can be expressed as ${\tt a}_\alpha = ({^*\tt p})^n({\tt a}'_\alpha)$
for some $n \geq 0$,
and proceed by induction on $n$.

~~~

\textbf{Base case ($n = 0$):}

${\tt a}_\alpha = ({^*\tt p})^0({\tt a}'_\alpha) = {\tt a}'_\alpha$,
so the base case is simply Lemma \ref{splicing-aligned}.

~~~

\textbf{Induction step ($n > 0$):}

If $l({\tt A}) = 1$, then fastness requires that ${\tt a}'_\alpha$
must be either ${\tt a}_\alpha$ or ${\tt a}_\omega$,
but $n > 0$ implies ${\tt a}'_\alpha \neq {\tt a}_\alpha$,
so ${\tt a}'_\alpha = {\tt a}_\omega$,
and we have an enveloping line
$$[{\tt a}_\alpha,...,{\tt a}_\omega = {\tt a}'_\alpha,...,{\tt a}'_\omega]$$
which provides ${\tt A} \aligned {\tt A}'$.

When $l({\tt A}) > 1$, then $\mathfrak{R}$ is the result of splicing a rig $\mathfrak{R}_0$ from $G$,
with ${\tt L}(\mathfrak{R}_0) = [{\tt a}_\alpha,...,{\tt a}_1]$,
onto a rig $\mathfrak{R}_1$ from $\Psi(G)$,
with ${\tt L}(\mathfrak{R}_1) = [{\tt a}_1,...,{\tt a}_\omega] = {\tt A}_1$.

Our induction precondition that $n > 0$ provides that
${\tt a}_1 = ({^*\tt p})^{n-1}({\tt a}'_\alpha) \preccurlyeq {\tt a}'_\alpha$.

Additionally, from our induction hypothesis, we have that ${\tt A}_1 \aligned {\tt A}'$;
i.e. that there is an enveloping line $[{\tt a}_1,...,{\tt a}]$,
where {\tt a} is the more recent of ${\tt a}_\omega$ and ${\tt a}'_\omega$.
Extending this enveloping line by ${\tt L}(\mathfrak{R}_0)$
to $[{\tt a}_\alpha,...,{\tt a}_1,...,{\tt a}]$, we get an enveloping line for {\tt A} and ${\tt A}'$.

Thus ${\tt A} \aligned {\tt A}'$,
and by induction we have that the $\Psi$ operation preserves supportiveness.
\end{proof}

\section{Lashings}

The splicing operation provides the ability to form rigs that provide support for arbitrarily long lines. 
The lashing operation is its compliment: it provides the ability to construct rigs with intermediate lines between the corkline and the leadline.


\begin{defn}
A pair of rigs $(\mathfrak{H}_0, \mathfrak{R}_1)$ is \emph{lashable} when it has the following properties
\begin{itemize}
    \item $\mathfrak{H}_0$ is a rig in $G_H$ (i.e. a half-hitch)
    \item $\mathfrak{R}_1$ is an arbitrary rig
    \item ${\tt C}(\mathfrak{H}_0) = [{\tt b}_0,...,{\tt b}_1]$
    \item ${\tt L}(\mathfrak{R}_1) = [{\tt b}'_0,...,{\tt b}_0,...,{\tt b}_1,...,{\tt b}'_1]$
    \item none of the twists from the corkline ${\tt C}(\mathfrak{R}_1)$ are present in $\mathfrak{H}_0$
\end{itemize}
\end{defn}

\begin{defn}
When $(\mathfrak{H}_0, \mathfrak{R}_1)$ is a lashable pair of rigs,
the lashing $\mathfrak{H}_0 \Xi \mathfrak{R}_1$
is taken to be the structure consisting of the union of
the collections of twists from $\mathfrak{H}_0$ and $\mathfrak{R}_1$,
and is assigned a corkline ${\tt C}(\mathfrak{L}) = {\tt C}(\mathfrak{R}_1)$
and a leadline ${\tt L}(\mathfrak{L}) = {\tt L}(\mathfrak{H}_0)$.
\end{defn}

\begin{propn}
If $(\mathfrak{H}_0$, $\mathfrak{R}_1)$ is a lashable pair of rigs,
then $\mathfrak{L} = \mathfrak{H}_0 \Xi \mathfrak{R}_1$ is a rig.
\end{propn}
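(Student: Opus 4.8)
The plan is to verify directly that ${\tt L}(\mathfrak{L}) = {\tt L}(\mathfrak{H}_0)$ holds fast to ${\tt C}(\mathfrak{L}) = {\tt C}(\mathfrak{R}_1)$; since the lashing is by definition the union of the twists of $\mathfrak{H}_0$ and $\mathfrak{R}_1$, once the holding-fast relation is in hand the structure carries the twists witnessing it and is therefore a rig. Write ${\tt L}(\mathfrak{H}_0) = [{\tt a}_\alpha,...,{\tt a}_\omega]$ and ${\tt C}(\mathfrak{R}_1) = [{\tt z}_\alpha,...,{\tt z}_\omega]$, and recall from lashability that ${\tt C}(\mathfrak{H}_0) = [{\tt b}_0,...,{\tt b}_1]$ and ${\tt L}(\mathfrak{R}_1) = [{\tt b}'_0,...,{\tt b}_0,...,{\tt b}_1,...,{\tt b}'_1]$. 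The target is the chain ${\tt z}_\alpha \tprec {\tt a}_\alpha \tprec {\tt a}_\omega \tprec {\tt z}_\omega$.

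First I would assemble that chain from the two constituent rigs. Because $\mathfrak{H}_0$ is a rig, its leadline holds fast to its corkline, giving ${\tt b}_0 \tprec {\tt a}_\alpha \tprec {\tt a}_\omega \tprec {\tt b}_1$. Because ${\tt C}(\mathfrak{H}_0)$ is a subline of ${\tt L}(\mathfrak{R}_1)$, reading along that line gives ${\tt b}'_0 \preccurlyeq {\tt b}_0$ and ${\tt b}_1 \preccurlyeq {\tt b}'_1$. Because $\mathfrak{R}_1$ is a rig, its leadline holds fast to ${\tt C}(\mathfrak{R}_1)$, giving ${\tt z}_\alpha \tprec {\tt b}'_0 \tprec {\tt b}'_1 \tprec {\tt z}_\omega$. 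Concatenating these, and using that $\preccurlyeq$ implies $\tprec$-or-equality together with the transitivity of $\tprec$ (the genuinely strict outer links ${\tt z}_\alpha \tprec {\tt b}'_0$ and ${\tt b}'_1 \tprec {\tt z}_\omega$ are exactly what the rig condition on $\mathfrak{R}_1$ supplies), yields ${\tt z}_\alpha \tprec {\tt a}_\alpha \tprec {\tt a}_\omega \tprec {\tt z}_\omega$, which is precisely the statement that ${\tt L}(\mathfrak{H}_0)$ holds fast to ${\tt C}(\mathfrak{R}_1)$. To close the argument I would note that the union of the two rigs' twists contains the two endpoint lines, the predecessor segments of ${\tt L}(\mathfrak{R}_1)$ realizing ${\tt b}'_0 \preccurlyeq {\tt b}_0$ and ${\tt b}_1 \preccurlyeq {\tt b}'_1$, and the supporting twists of each of $\mathfrak{H}_0$ and $\mathfrak{R}_1$ realizing the remaining precedences --- hence ``additional twists sufficient to demonstrate'' the holds-fast relation, as the definition of a rig demands.

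I do not expect a genuine obstacle here: the argument is a bookkeeping exercise gluing two holds-fast chains end to end. The two points to be careful about are (i) strictness --- the places where $\tprec$ rather than merely $\preccurlyeq$ is needed are the outermost links, and those come from the rig conditions on $\mathfrak{R}_1$, so no degenerate case (such as ${\tt b}'_0 = {\tt b}_0$ or ${\tt b}_1 = {\tt b}'_1$) breaks the chain; and (ii) the hypothesis that no twist of ${\tt C}(\mathfrak{R}_1)$ appears in $\mathfrak{H}_0$ is not actually needed for this proposition --- it is a non-degeneracy condition that will matter when reasoning about lashing inside a guild --- so the proof should take care not to lean on it.
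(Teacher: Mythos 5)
Your proposal is correct and follows essentially the same route as the paper: both splice the holds-fast chain of $\mathfrak{H}_0$ into that of $\mathfrak{R}_1$ via the subline relation ${\tt b}'_0 \preccurlyeq {\tt b}_0 \prec \dots \prec {\tt b}_1 \preccurlyeq {\tt b}'_1$ and conclude ${\tt z}_\alpha \tprec {\tt a}_\alpha \tprec {\tt a}_\omega \tprec {\tt z}_\omega$ by transitivity. Your explicit care about where strictness comes from, and your observation that the corkline non-duplication hypothesis is not used here, are both accurate refinements of the paper's one-line chain.
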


\begin{proof}
Taking ${\tt C}(\mathfrak{R}_1) = [{\tt z}_\alpha,...,{\tt z}_\omega]$,
and ${\tt L}(\mathfrak{H}_0) = [{\tt a}_\alpha,...,{\tt a}_\omega]$,
it follows from $\mathfrak{H}_0$ and $\mathfrak{R}_1$ holding fast that
${\tt z}_\alpha \tprec {\tt m}'_0 \tprec {\tt m}_0 \tprec {\tt a}_\alpha \tprec
{\tt a}_\omega \tprec {\tt m}_1 \tprec {\tt m}'_1 \tprec {\tt z}_\omega$.

So $\mathfrak{L}$ holds ${\tt L}(\mathfrak{L})$ fast to ${\tt C}(\mathfrak{L})$,
and is therefore a rig.
\end{proof}

\begin{center}
    \includegraphics[width=0.80\columnwidth]{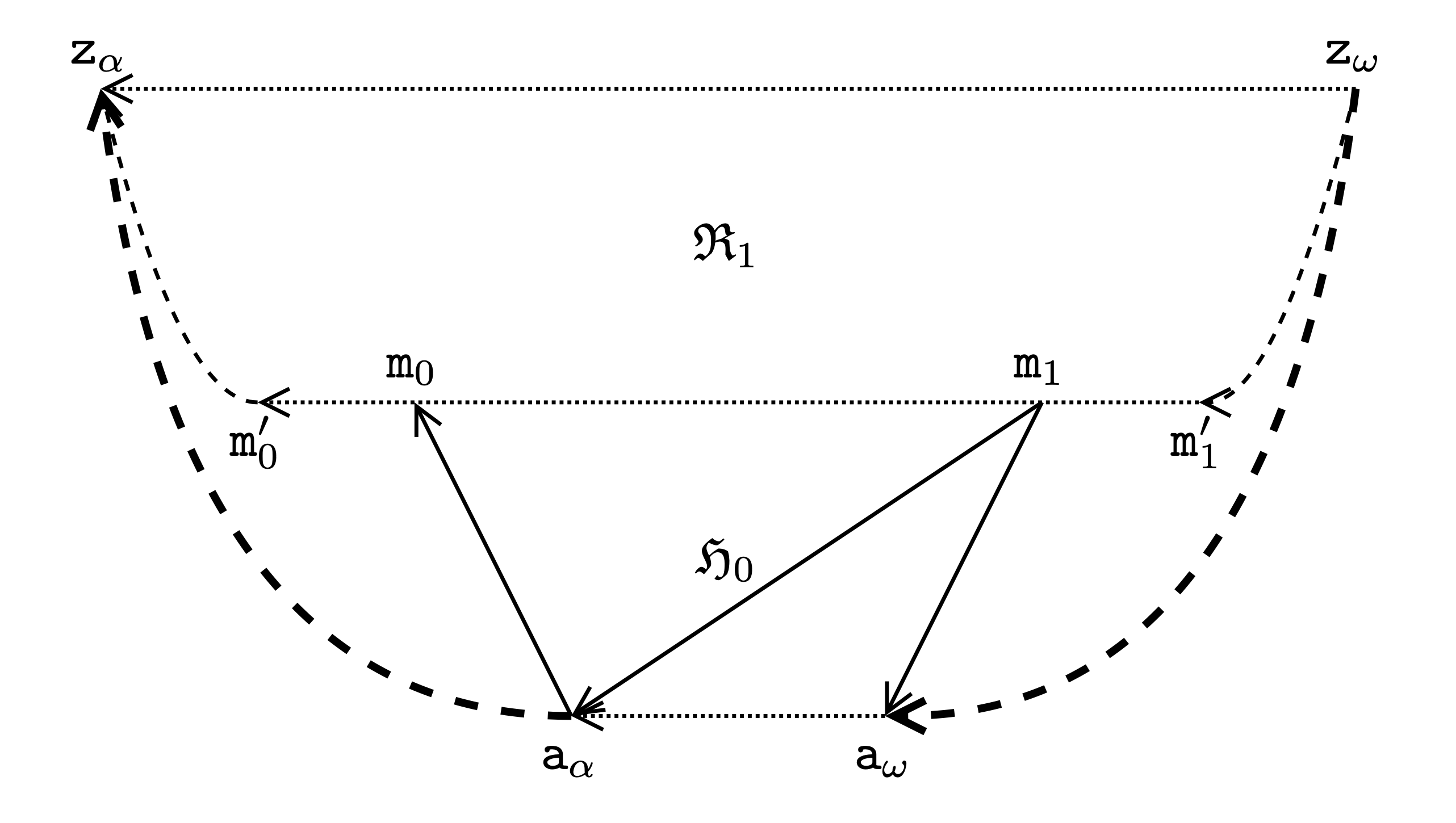}
\end{center}

It's worth noting that the non-duplication of corkline twists guarantees\footnote{Except in the degenerate case of a line tethered to itself, and a rig consisting of those self-tethering hitches spliced together.} 
that twists in a rig's corkline
are not duplicated in other lines in the rig, 
for rigs obtained starting from hitches and applications of the splicing and lashing operations.

This lashing operation allows the construction of a rig whose leadline is not directly tethered to its corkline. 
To account for this we introduce the notion of a tetherline and its height.

\begin{defn}
The \emph{height} of a rig $\mathfrak{R}$ is the least $n$
so that $({^*\tt t})^n({\tt a}_\alpha) \in {\tt C}(\mathfrak{R})$.
\end{defn}

\begin{defn}
The \emph{tetherline} of a rig $\mathfrak{R}$
is the collection of twists connecting ${\tt a}_\alpha$ to the twist
$({^*\tt t})^n({\tt a}_\alpha) \in {\tt C}(\mathfrak{R})$.
\end{defn}

In other words, the tetherline of a rig is the collection of twists which contains:
the first twist in the rig's leadline, and
for each twist ${\tt x}$ in the tetherline, which is not also in the rig's corkline,
that twist's tether ${\tt t}({\tt x})$, that twist's fast tether ${^*\tt t}({\tt x})$,
and all intervening twists {\tt y} with
${\tt t}({\tt x}) \preccurlyeq {\tt y} \preccurlyeq {^*\tt t}({\tt x})$. 

Note that the height of a rig is also the height of its tetherline.

~~~

The splicing and lashing operations
can be used to construct rigs that
have an arbitrary (leadline) length 
and an arbitrary (tetherline) height.
Since the splicing operation only requires 
that the corklines and leadlines of the rigs being spliced be aligned,
without placing any such requirements on intervening lines,
these rigs can be quite flexible.

In particular, the leadline can change the intermediate line 
it is fastened to partway through a rig.
As long as those intermediate lines 
are ultimately rigged up to the corkline,
such moves maintain the corkline's support 
for that leadline.

This is a central feature of TODA files, 
where rigging provides integrity-at-a-distance
while allowing dynamic choice of custodianship:
the source of truth of a file can move from 
server to laptop to phone to a different server,
preserving the integrity of the original server regardless of location.

\section{$G_\uparrow$ is supportive}

We are now ready to introduce $G_\uparrow$, the simplest guild that is closed under the splicing and lashing operations.

\begin{defn}
$G_\uparrow$ is the guild consisting of all rigs that are either
\begin{itemize}
    \item a single half-hitch
    \item a length 1 rig from $G_\uparrow$
    spliced to another rig from $G_\uparrow$
    \item a single half-hitch lashed up to another rig from $G_\uparrow$
\end{itemize}
\end{defn}

Before we show that $G_\uparrow$ is supportive
we need a small result to constrain rig height.

\begin{propn}
\label{tether-prec}
If for a twist {\tt x} and some $n > 0$, $({^*\tt t})^n({\tt x}) \aligned {\tt x}$,
then $({^*\tt t})^n({\tt x}) \prec {\tt x}$.
\end{propn}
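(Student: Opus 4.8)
The plan is to exploit the chronological relation $\tprec$ that is built into the structures, together with the basic fact stated earlier in the excerpt that whenever two twists are aligned and one causally precedes the other, that causal precedence upgrades to a predecessor relation: if ${\tt u} \aligned {\tt v}$ and ${\tt u} \tprec {\tt v}$, then ${\tt u} \prec {\tt v}$. So the whole proposition reduces to establishing the single causal fact $({^*\tt t})^n({\tt x}) \tprec {\tt x}$ for $n > 0$, and then invoking alignment to conclude $\prec$. Note we cannot get $({^*\tt t})^n({\tt x}) = {\tt x}$ because $\tprec$ is non-reflexive, which is exactly why the conclusion is strict.

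First I would observe that for \emph{any} twist ${\tt w}$, one step of fast-tether satisfies ${^*\tt t}({\tt w}) \tprec {\tt w}$. This is because ${^*\tt t}({\tt w})$ is, by definition, either ${\tt t}({\tt w})$ (the tether, whose hash is directly included in ${\tt w}$, giving ${\tt t}({\tt w}) \tprec {\tt w}$) or ${^*\tt p}({\tt t}({\tt w}))$, which satisfies ${^*\tt p}({\tt t}({\tt w})) \preccurlyeq {\tt t}({\tt w}) \tprec {\tt w}$; in either branch ${^*\tt t}({\tt w}) \tprec {\tt w}$ since $\preccurlyeq$ implies $\tprec$ on distinct twists and $\tprec$ is transitive. (One caveat: this uses that ${\tt w}$ actually has a non-null tether, i.e. that the fast-tether function is meaningfully defined at each step — since ${^*\tt t}$ is only defined on twists whose tether chain reaches a fast twist, the hypothesis that $({^*\tt t})^n({\tt x})$ exists already guarantees every intermediate application is legitimate.)

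Next I would iterate: applying the one-step fact to ${\tt w} = ({^*\tt t})^{k}({\tt x})$ for $k = 0, 1, \dots, n-1$ gives a chain
$$({^*\tt t})^n({\tt x}) \tprec ({^*\tt t})^{n-1}({\tt x}) \tprec \cdots \tprec {^*\tt t}({\tt x}) \tprec {\tt x},$$
and transitivity of $\tprec$ collapses this to $({^*\tt t})^n({\tt x}) \tprec {\tt x}$ (here $n > 0$ is essential so the chain is non-empty). Combining with the hypothesis $({^*\tt t})^n({\tt x}) \aligned {\tt x}$ and the alignment-upgrade fact yields $({^*\tt t})^n({\tt x}) \prec {\tt x}$, as required.

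The main obstacle — really the only subtlety — is the one-step claim ${^*\tt t}({\tt w}) \tprec {\tt w}$, specifically making sure the case split in the definition of ${^*\tt t}$ is handled cleanly and that I correctly carry across the ${^*\tt p}$ that appears in the "otherwise" branch (using that ${^*\tt p}$ of anything is a predecessor of it, hence $\tprec$ it, hence $\tprec$ ${\tt w}$ by transitivity). Everything after that is a routine induction/transitivity argument, and the alignment hypothesis does no work beyond the final upgrade step.
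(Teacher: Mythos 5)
Your proposal is correct and follows essentially the same route as the paper: establish $({^*\tt t})^n({\tt x}) \tprec {\tt x}$ from hash inclusion (since ${^*\tt t}$ is built from ${\tt t}$ and ${\tt p}$, each of which gives $\tprec$), then use the alignment hypothesis to upgrade $\tprec$ to $\prec$. Your treatment of the case split in the definition of ${^*\tt t}$ is just a more explicit rendering of the paper's remark that the fast tether is a composition of ${\tt p}$ and ${\tt t}$.
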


\begin{proof}
A twist {\tt x} includes its previous and tether hashes ${\tt t}({\tt x})$ and ${\tt p}({\tt x})$,
so by hash inclusion we have that ${\tt t}({\tt x}) \tprec {\tt x}$ and ${\tt p}({\tt x}) \tprec {\tt x}$.

Because the fast tether function ${^*\tt t}$ is just a composition of {\tt p} and {\tt t},
it follows that $({^*\tt t})^n({\tt x}) \tprec {\tt x}$.

We have already assumed that $({^*\tt t})^n({\tt x}) \aligned {\tt x}$,
so $({^*\tt t})^n({\tt x}) \tprec {\tt x}$ 
implies $({^*\tt t})^n({\tt x}) \prec {\tt x}$,
which is our desired result.
\end{proof}

We are now ready to introduce our main result.

\begin{thm}
\label{main-up-support}
$G_\uparrow$ is supportive.
\end{thm}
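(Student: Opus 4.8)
The plan is to bootstrap from the two supportiveness results already proved, letting splicing closure absorb the ``length'' direction and an induction on lashing depth handle the new ``height'' direction. The first step is the observation that $G_\uparrow = \Psi(B)$, where $B$ is the guild of all length-$1$ rigs of $G_\uparrow$. A length-$1$ rig of $G_\uparrow$ cannot be a non-trivial splice, since splicing a length-$1$ rig onto a length-$n$ rig has length $n+1 \geq 2$; hence a length-$1$ rig of $G_\uparrow$ is either a single half-hitch or a single half-hitch lashed up to some rig of $G_\uparrow$, and, conversely, $G_\uparrow$ is the closure of these under splicing, which is exactly the recursive description of $\Psi(B)$. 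Since every rig of $B$ has length $1$, Theorem \ref{splicing-sound} then reduces the whole statement to the claim that $B$ is supportive: no two non-disjoint length-$1$ rigs of $G_\uparrow$ are misaligned.

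To prove $B$ supportive I would induct on the number of nested lashings used to construct a rig, which is a well-founded measure (Proposition \ref{tether-prec} is what guarantees that the tetherline climbs involved stay strictly $\prec$-decreasing, so the inner rig of a lashing genuinely sits strictly lower). The induction hypothesis at level $k$ is that the length-$1$ rigs of $G_\uparrow$ built with fewer than $k$ lashings form a supportive guild; replaying the identity $G_\uparrow = \Psi(B)$ at that level and invoking Theorem \ref{splicing-sound} promotes this to supportiveness of \emph{all} rigs of $G_\uparrow$ built with fewer than $k$ lashings, which is what the recursion will actually use. The base case, $k = 0$, is precisely the guild $G_H$, supportive by Theorem \ref{half-hitch_sound}.

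For the inductive step, take non-disjoint length-$1$ rigs $\mathfrak{R}, \mathfrak{R}'$ of $G_\uparrow$ with the standard notation. As in the proof of Theorem \ref{half-hitch_sound}, the subcase ${\tt a}_\alpha \neq {\tt a}'_\alpha$ is immediate: with ${\tt a}_\alpha \preccurlyeq {\tt a}'_\alpha$ without loss of generality, the only fast twists of the length-$1$ fast line {\tt A} are its endpoints, so ${\tt a}'_\alpha = {\tt a}_\omega$ and $[{\tt a}_\alpha,...,{\tt a}_\omega = {\tt a}'_\alpha,...,{\tt a}'_\omega]$ is an enveloping line. So assume ${\tt a}_\alpha = {\tt a}'_\alpha$; then the half-hitches $H_0, H'_0$ that $\mathfrak{R}$ and $\mathfrak{R}'$ supply with lead ${\tt a}_\alpha$ have the same fastener, ${\tt f}(H_0) = {\tt t}({\tt a}_\alpha) = {\tt f}(H'_0)$. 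If both rigs are bare half-hitches this is the situation handled in Theorem \ref{half-hitch_sound}; otherwise at least one, say $\mathfrak{R}$, is a lashing $\mathfrak{H}_0 \Xi \mathfrak{R}_1$, and the sole difference from the bare case is that the hoist ${\tt h}(H_0)$, still characterized as the earliest successor of ${\tt f}(H_0)$ whose rigging trie carries ${\tt a}_\alpha$ as a key, now lies on the leadline of the inner rig $\mathfrak{R}_1$ rather than directly on a corkline. The crux is to bring the competing hoist onto that same line: pass to the rig carved out of $\mathfrak{R}_1$'s construction whose leadline is the portion of ${\tt L}(\mathfrak{R}_1)$ running from the nearest fast twist at or before ${\tt f}(H_0)$ onward (available because the fast twists of a $\Psi$-built leadline are exactly its splice fenceposts, and its corkline is aligned with ${\tt C}(\mathfrak{R}_1) = {\tt C}(\mathfrak{R})$), and pair it against the analogous rig extracted from $\mathfrak{R}'$. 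These are non-disjoint rigs of $G_\uparrow$ built with fewer lashings than $\mathfrak{R}$, so the induction hypothesis aligns their leadlines; on the resulting common enveloping line ${\tt f}(H_0)$ has a unique earliest ${\tt a}_\alpha$-keyed successor, forcing ${\tt h}(H_0) = {\tt h}(H'_0)$, hence ${\tt m}(H_0) = {\tt m}(H'_0)$, i.e. ${\tt a}_\omega = {\tt a}'_\omega$ and ${\tt A} = {\tt A}'$.

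The step I expect to be the main obstacle is exactly this relocation and its bookkeeping: checking that a rig for an arbitrary fencepost-suffix of a $G_\uparrow$ leadline really is available from the construction and carries an appropriately aligned corkline; keeping the two interlocking inductions, the outer one on lashing depth and the inner length-direction one delegated to Theorem \ref{splicing-sound}, free of circularity; handling the asymmetric subcase in which $\mathfrak{R}'$ is a bare half-hitch while $\mathfrak{R}$ is a deep lashing, where the relocation can be applied on only one side; and steering around the degenerate self-tethered rigs flagged after the lashing section, in which corkline twists may recur on other lines. Once the relocation move is made to work at a single level, Theorems \ref{half-hitch_sound} and \ref{splicing-sound} supply everything else.
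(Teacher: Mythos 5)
Your overall architecture matches the paper's: reduce the length direction to Theorem \ref{splicing-sound} (the paper does this inside a single structural induction rather than via the identity $G_\uparrow = \Psi(B)$, but the effect is the same), then handle length-$1$ rigs by induction on lashing depth, aligning the inner rigs' leadlines and concluding via hoist uniqueness. However, the step you flag as ``the main obstacle'' --- the asymmetric subcase where the two non-disjoint length-$1$ rigs have different lashing depths --- is a genuine gap, not bookkeeping, and your sketch does not close it. In that situation the two hoists ${\tt h}(H_0)$ and ${\tt h}(H'_0)$ live on lines at different tether levels (one on a corkline, one on the leadline of an inner rig), and since successors of the common fastener form a tree rather than a chain, there is no common enveloping line on which ``earliest successor keyed by ${\tt a}_\alpha$'' is well defined; the hoist-comparison argument simply cannot start. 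The paper's missing ingredient is to prove the case cannot occur: for non-disjoint length-$1$ rigs with ${\tt a}_\alpha = {\tt a}'_\alpha$, the heights must be equal. This uses Proposition \ref{tether-prec} not for well-foundedness (as you use it) but substantively: if $h > h'$ then ${\tt z}_\alpha = ({^*\tt t})^{h-h'}({\tt z}'_\alpha)$, so alignment of the corklines forces ${\tt z}_\alpha \prec {\tt z}'_\alpha$, while the lashing constraint that corkline twists do not recur elsewhere in the rig forces ${\tt z}'_\alpha \prec {\tt z}_\alpha$ --- a contradiction.

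Once heights are equal, both rigs at depth $>1$ are lashings of half-hitches $\mathfrak{H}, \mathfrak{H}'$ onto inner rigs $\mathfrak{S}, \mathfrak{S}'$ sharing the original corklines, and your ``relocation'' maneuver becomes unnecessary: $\mathfrak{S}$ and $\mathfrak{S}'$ are already non-disjoint (their leadlines share ${^*\tt t}({\tt a}_\alpha)$), the induction hypothesis aligns ${\tt L}(\mathfrak{S}) \aligned {\tt L}(\mathfrak{S}')$, and then $\mathfrak{H}, \mathfrak{H}'$ are non-disjoint rigs of $G_H$, so Theorem \ref{half-hitch_sound} supplies ${\tt A} \aligned {\tt A}'$ directly. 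So you can discard the carving of fencepost-suffix sub-rigs (whose availability you would otherwise have to justify), but you must add the height-equality lemma; without it the proof does not go through.
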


\begin{proof}
Once again, we take non-disjoint rigs $\mathfrak{R}$ and $\mathfrak{R}'$ in $G_\uparrow$,
and seek to prove that ${\tt A} \aligned {\tt A}'$.

We also take $l$ and $l'$ to be the (leadline) lengths
and $h$ and $h'$ to be the heights
of $\mathfrak{R}$ and $\mathfrak{R}'$ respectively,
and proceed by structural induction.

~~~

\textbf{Base case ($l = l' = h = h' = 1$):}

In this case, both $\mathfrak{R}$ and $\mathfrak{R'}$ consist of single half-hitches, putting them in $G_H$,
so by Theorem \ref{half-hitch_sound} it follows that ${\tt A} \aligned {\tt A}'$.

~~~

\textbf{Inductive case ($l = l' = 1$, and either $h > 1$ or $h' > 1$):} 

Assuming without loss of generality that ${\tt a}_\alpha \preccurlyeq {\tt a}'_\alpha$,
then having $l = 1$ and non-disjoint rigs 
gives that we either have that ${\tt a}'_\alpha = {\tt a}_\omega$
or that ${\tt a}'_\alpha = {\tt a}_\alpha$.

If ${\tt a}'_\alpha = {\tt a}_\omega$ we simply 
prepend {\tt A} to ${\tt A}'$ to get an enveloping line,
proving that ${\tt A} \aligned {\tt A}'$.

This leaves the case where ${\tt a}'_\alpha = {\tt a}_\alpha$.

We begin by assuming that $h > h'$ and deriving a contradiction.
The definition of $h$ provides that:
\begin{itemize}
    \item $({^*\tt t})^h({\tt a}_\alpha) = {\tt z}_\alpha$ and
    \item $({^*\tt t})^{h'}({\tt a}_\alpha) = {\tt z}'_\alpha$.
\end{itemize}

Thus it follows that $({^*\tt t})^{h-h'}({\tt z}'_\alpha) \aligned {\tt z}_\alpha$.

Our preconditions provide that ${\tt z}_\alpha \aligned {\tt z}'_\alpha$,
and the construction of lashing disallows ${\tt z}_\alpha \preccurlyeq {\tt z}'_\alpha$,
because that would place ${\tt z}'_\alpha$ both in ${\tt C}(\mathfrak{R})$ and also elsewhere in the rig,
so it must be the case that ${\tt z}'_\alpha \prec {\tt z}_\alpha$.

However Proposition \ref{tether-prec}
gives that because ${\tt z}_\alpha = ({^*\tt t})^{h-h'}({\tt z}'_\alpha) \aligned {\tt z}'_\alpha$,
it must be the case that ${\tt z}_\alpha \prec {\tt z}'_\alpha$.

This is a contradiction, so it cannot be the case that $h > h'$.
A similar argument excludes the case where $h' > h$, so the heights must be equal.

In the $h = h'$ case, we know that a single half-hitch has $h=1$, but $h > 1$
so $\mathfrak{R}$ is not a single half-hitch.
Further, a rig constructed by splicing together two rigs has a length
that is the sum of the lengths of the spliced rigs;
since every rig has a length of at least one,
a spliced rig must have a length of at least two.
Thus with $l = 1$, $\mathfrak{R}$ cannot be constructed by splicing two rigs together.
So we must be in the remaining case, where $\mathfrak{R}$ consists of
a half-hitch $H$ lashed up to a rig $\mathfrak{S}$, with a height of $h-1$.

Similarly, $\mathfrak{R}'$ must consist of a half-hitch $H'$ lashed up to a rig $\mathfrak{S}'$, with a height of $h'-1$. 

By the definition of lashing, we have that the leadline {\tt B} of $\mathfrak{S}$
(and similarly the leadline ${\tt B}'$ of $\mathfrak{S}'$) must begin with ${^*\tt t}({\tt a}_0)$,
and with $\mathfrak{S}$ and $\mathfrak{S}'$ having the same corklines
as $\mathfrak{R}$ and $\mathfrak{R}'$ respectively,
we know them to be non-disjoint.
Thus our induction hypothesis provides that 
$\mathfrak{S}$ and $\mathfrak{S}'$ cannot be misaligned,
so it follows that ${\tt B} \aligned {\tt B}'$.

This alignment and the existence of a common twist between {\tt A} and ${\tt A}'$
gives that $\mathfrak{H}$ and $\mathfrak{H}'$ are non-disjoint,
and since they both belong to the supportive guild $G_H$,
they must therefore be aligned.
Thus we have alignment between their leadlines, i.e. ${\tt A} \aligned {\tt A}'$.

~~~

\textbf{Inductive case ($l,l' \geq 1$):}

Like with $h$ in the previous case, we assume without loss of generality that $l > 1$.
Because a single half-hitch has length $l = 1$, and lashings also yield rigs with length $l = 1$,
it must be the case that $\mathfrak{R}$ is obtained by splicing.

The induction hypothesis provides that the spliced rigs satisfy the conditions of Theorem \ref{splicing-sound},
which provides that ${\tt A} \aligned {\tt A}'$.

Thus by structural induction, we have that $G_\uparrow$ is supportive.
\end{proof}

Theorem \ref{main-up-support} shows that the guild $G_\uparrow$ is supportive.
Proposition \ref{unique-canonical-succession} shows that
a supportive guild guarantees unique canonical succession in a supported line.

~~~

\noindent
To put this in the vernacular:

\begin{cor}
Rigs in $G_\uparrow$ prevent double spend. 
\end{cor}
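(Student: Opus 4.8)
The plan is to establish that the informal statement ``Rigs in $G_\uparrow$ prevent double spend'' is exactly a restatement of the combination of Theorem \ref{main-up-support} and Proposition \ref{unique-canonical-succession}, once we fix the correspondence between the rigging vocabulary and the double-spend vocabulary. First I would identify a ``coin'' or ``asset'' with a line --- the leadline of a rig --- and its successive states (ownership records, endorsements, transfers) with the successive twists ${\tt a}_0, {\tt a}_1, \dots$ along that line. A ``spend'' of the coin in state ${\tt a}_0$ is the act of producing a successor ${\tt a}_1$ (with ${\tt p}({\tt a}_1) = {\tt a}_0$) together with a rig in $G_\uparrow$ holding $[{\tt a}_0, {\tt a}_1]$ fast to some fixed integrity provider's corkline ${\tt Z}$. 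A ``double spend'' is then the existence of two distinct such successors ${\tt a}_1 \neq {\tt a}'_1$, each backed by a rig in $G_\uparrow$ relative to (aligned corklines ending at) the same integrity provider ${\tt Z}$.

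The key steps, in order, are: (1) invoke Theorem \ref{main-up-support} to conclude that $G_\uparrow$ is a supportive guild, so it contains no misaligned pair of rigs; (2) observe that the two hypothetical spending rigs $\mathfrak{R}$ and $\mathfrak{R}'$ are non-disjoint, since their corklines are aligned (both ending at predecessors of, or equal to, the shared integrity twist ${\tt z}$) and their leadlines share the twist ${\tt a}_0$; (3) apply Proposition \ref{unique-canonical-succession} with ${\tt a}_0$, ${\tt a}_1$, ${\tt a}'_1$ and the non-disjoint rigs $\mathfrak{R}, \mathfrak{R}'$ to conclude ${\tt a}_1 = {\tt a}'_1$, contradicting the assumption of a genuine double spend. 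This chain is essentially immediate once the dictionary is in place, so the mathematical content is already done; what remains is to phrase it cleanly.

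The main obstacle is therefore not a proof difficulty but a modelling one: articulating precisely what ``prevent double spend'' means in terms of twists and rigs, and arguing that this is a faithful formalization --- in particular that the ``same coin'' really does correspond to ``aligned leadlines through a common twist'' and ``same ledger/authority'' to ``aligned corklines,'' and that an attacker cannot evade the result by, e.g., presenting corklines that are misaligned (which would make the rigs disjoint and the statement vacuous for that pair --- correctly so, since two unrelated authorities making unrelated claims is not a double spend against either one). I would also want to note the mild caveat, already flagged in the paper's footnote on lashings, that the non-duplication-of-corkline-twists property (and hence this guarantee) excludes the degenerate self-tethering case. With these points addressed, the corollary follows directly from the two cited results.
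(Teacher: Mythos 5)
Your proposal is correct and matches the paper's own (informal) justification exactly: the corollary is stated ``in the vernacular'' immediately after the paper notes that Theorem \ref{main-up-support} gives supportiveness of $G_\uparrow$ and Proposition \ref{unique-canonical-succession} turns supportiveness into uniqueness of the held-fast successor. Your added care in making the double-spend dictionary explicit goes slightly beyond what the paper writes down, but the mathematical content is the same.
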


\newpage

\section{Glossary}

We introduce and use a number of data structures, as well as functions on those structures.
To maintain legibility we adhere to the following conventions
\begin{itemize}
    \item twists are represented by lower-case monospace; e.g. {\tt t} is a twist
    \item lines are represented by uppercase monospace; e.g. {\tt A} is a line
    \item lines can also be represented as square brackets containing a list of twists (in chronological order);
        e.g. $[{\tt a}_0,{\tt a}_1,{\tt a}_2,{\tt a}_3]$ 
    \item hitches are represented by upper-case italics; e.g. $H$ is a hitch
    \item guilds are also represented by upper-case italics; e.g. $G$ is a guild
    \item rigs are represented by uppercase fraktur; e.g. $\mathfrak{R}$ is a rig
    \item other values are represented by lowercase italics (e.g. simple numbers, hash outputs, arbitrary inputs, tries); e.g. $n$
\end{itemize}

Further, the symbols for functions follow the same convention that is used for their images;
e.g ${\tt f}(\mathfrak{R})$ is a twist-valued function on a rig.

Common functions are
\begin{itemize}
    \item ${\tt t}({\tt x})$; the tether of {\tt x}
    \item ${^*\tt t}({\tt x})$; the fast tether of {\tt x}
    \item ${\tt p}({\tt x})$; the previous of {\tt x}
    \item ${^*\tt p}({\tt x})$; the previous of {\tt x}
    \item $r({\tt x})$; the rigging trie of {\tt x}
    \item ${\tt f}(H)$; the fastener of $H$
    \item ${\tt h}(H)$; the hoist of $H$
    \item ${\tt l}(H)$; the lead of $H$
    \item ${\tt m}(H)$; the meet of $H$
    \item ${\tt n}(H)$; the post of $H$
    \item $h(x)$; the hash of $x$
\end{itemize}

Common relations and operations are:  
\begin{itemize}
    \item ${\tt x} \prec {\tt y}$ is read ``x is a predecessor of y" or ``x precedes y", and means twist ${\tt x}$ is on the line defined by twist ${\tt y}$;
    \item ${\tt x} \aligned {\tt y}$ is read ``x is aligned with y" or ``x and y are aligned" and means that
    ${\tt x} \preccurlyeq {\tt y}$ or ${\tt y} \preccurlyeq {\tt x}$;
    \item ${\tt A} \aligned {\tt B}$ is read ``A is aligned with B" or ``A and B aligned", and means line ${\tt A}$ and line ${\tt B}$ are both contained within a line ${\tt C}$;
    \item ${\mathfrak{R} \aligned \mathfrak{R}'}$ is read ``R and R prime are aligned", and means
    ${\tt L}(\mathfrak{R}) \aligned {\tt L}(\mathfrak{R}')$;
    \item $x \tprec y$ is read ``x causally precedes y", and means that $y$ includes $x$ through one-way functions, for instance by containing the cryptographic hash of $x$.
\end{itemize}

When comparing two rigs $\mathfrak{R}$ and $\mathfrak{R}'$,
{\tt A}/{\tt a} and {\tt Z}/{\tt z} are used for their leadlines and corklines
\begin{itemize}
    \item ${\tt A} = {\tt L}(\mathfrak{R}) = [{\tt a}_\alpha,...,{\tt a}_\omega]$
    \item ${\tt Z} = {\tt C}(\mathfrak{R}) = [{\tt z}_\alpha,...,{\tt z}_\omega]$
    \item ${\tt A}' = {\tt L}(\mathfrak{R'}) = [{\tt a}'_\alpha,...,{\tt a}'_\omega]$
    \item ${\tt Z}' = {\tt C}(\mathfrak{R'}) = [{\tt z}'_\alpha,...,{\tt z}'_\omega]$
\end{itemize}
Additionally, when $\mathfrak{R}$ and $\mathfrak{R}'$ are non-disjoint,
{\tt z} is the most recent of ${\tt z}_\omega$ and ${\tt z}'_\omega$
and {\tt a} is the common twist on their leadlines.

\section{Acknowledgements}

We wish to particularly thank Toufi Saliba for cofounding the TODA vision, Hassan Khan for relentlessly championing it, and Adam Gravitis for engineering insight during the long gestation of this work. Thank you to the following reviewers for their valuable feedback: Cory Sulpizi, Mike Everson, Alexander Fertman, and Robert Moir.

The authors are grateful to TODAQ for their generous financial support of this work.

\bibliographystyle{amsplain}
\bibliography{rigging}

\end{document}